\newtheorem{theorem}{Theorem}
\newtheorem{lemma}{Lemma}
\newtheorem{corollary}{Corollary}
\theoremstyle{definition}
\newtheorem{remark}{Remark}
\newtheorem{example}{Example}
\emailk\url{krish.chat@ist.ac.at}
\emailm\url{monika.henzinger@univie.ac.at}
\def\qed{\rule{0.4em}{1.4ex}}
\newcommand{\pat}{\omega}
\newcommand{\Paths}{\Omega}
\newcommand{\PA}{1}
\newcommand{\straa}{\sigma}
\newcommand{\Straa}{\Sigma}
\newcommand{\strab}{\pi}
\newcommand{\SA}{S_1}
\newcommand{\SR}{S_{P}}
\newcommand{\gamegraph}{G}
\newcommand{\winas}[1]{\langle \! \langle #1 \rangle\! \rangle_{\mathit{almost}} }
\newcommand{\waa}{\winas{1}}
\newcommand{\Prb}{\mathrm{Pr}}
\newcommand{\Inf}{\mathrm{Inf}}
\newcommand{\Parity}{{\mathrm{Parity}}}
\newcommand{\Buchi}{\textrm{B\"uchi}}
\newcommand{\attr}{\mathit{Attr}}
\newcommand{\Nats}{\mathbb{N}}
\newcommand{\nats}{\mathbb{N}}
\newcommand{\set}[1]{\{\: #1 \:\}}
\newcommand{\seq}[1]{\langle #1 \rangle}
\newcommand{\trans}{\delta}
\newcommand{\distr}{{\cal D}}
\newcommand{\PM}{\mathit{PM}}
\newcommand{\Aa}{{\cal A}}
\newcommand{\obdd}{{\sc Obdd}}
\newcommand{\slopefrac}[2]{\leavevmode\kern.1em
  \raise .5ex\hbox{\the\scriptfont0 #1}\kern-.1em
  /\kern-.15em\lower .25ex\hbox{\the\scriptfont0 #2}}
\newcommand{\half}{\slopefrac{1}{2}}
\begin{document}

\title{Symbolic Algorithms for Qualitative Analysis of \\ Markov Decision Processes with B\"uchi Objectives
\thanks{The research was supported by Austrian Science Fund (FWF) Grant No P 23499-N23 on Modern Graph Algorithmic 
Techniques in Formal Verification, FWF NFN Grant No S11407-N23 (RiSE), ERC Start grant 
(279307: Graph Games), and Microsoft faculty fellows award.}
\thanks{Appeared in Formal Methods in System Design, 42(3):301-327, 2013. A preliminary version of the paper appeared in the proceedings of the 23rd International Conference on Computer Aided Verification, pp. 260-276, 2011.} 
}
\author{Krishnendu Chatterjee\\IST Austria\\\small{\tt krish.chat@gmail.com}  \and 
	Monika Henzinger\\University of Vienna\\\small{\tt monika.henzinger@univie.ac.at} \and
	Manas Joglekar\\Stanford University\\\small{\tt manasrj@stanford.edu} \and
	Nisarg Shah\\Carnegie Mellon University\\\small{\tt nkshah@cs.cmu.edu}
}
\date{}


\maketitle
%
%
%
%
%
%
%
%
%
%
%

\pagenumbering{arabic}
\pagestyle{plain}

\maketitle

\begin{abstract} 
We consider Markov decision processes (MDPs) with B\"uchi (liveness) 
objectives. 
We consider the problem of computing the set of \emph{almost-sure} 
winning states from where the objective can be ensured with probability~1.
Our contributions are as follows: 
First, we present the first subquadratic symbolic algorithm to compute the 
almost-sure winning set for MDPs with B\"uchi objectives; our algorithm 
takes $O(n \cdot \sqrt{m})$ symbolic steps as compared to the previous
known algorithm that takes $O(n^2)$ symbolic steps, where $n$ is the number of 
states and $m$ is the number of edges of the MDP. 
In practice MDPs have constant out-degree, and then our symbolic algorithm takes
$O(n \cdot \sqrt{n})$ symbolic steps, as compared to the previous known 
$O(n^2)$ symbolic steps algorithm.
Second, we present a new algorithm, namely \emph{win-lose} algorithm, with the
following two properties: (a) the algorithm iteratively computes subsets of the 
almost-sure winning set and its complement, as compared to all previous algorithms 
that discover the almost-sure winning set upon termination; and (b) requires  
$O(n \cdot \sqrt{K})$ symbolic steps, where $K$ is the maximal number of edges of 
strongly connected components (scc's) of the MDP.
The win-lose algorithm requires symbolic computation of scc's.
Third, we improve the algorithm for symbolic scc computation; the previous 
known algorithm takes linear symbolic steps, and our new algorithm 
improves the constants associated with the linear number of steps. 
In the worst case the previous known algorithm takes $5\cdot n$ symbolic 
steps, whereas our new algorithm takes $4 \cdot n$ symbolic steps.
\end{abstract}

\section{Introduction}
\noindent{\bf Markov decision processes.} 
The standard model of systems in verification of probabilistic systems is  
\emph{Markov decision processes (MDPs)} that exhibit both probabilistic 
and nondeterministic behavior~\cite{Howard}.
MDPs have been used to model and solve control problems for stochastic systems~\cite{FV97}: 
there, nondeterminism represents the freedom of the controller to choose a 
control action, while the probabilistic component of the behavior describes the 
system response to control actions. 
MDPs have also been adopted as models for concurrent probabilistic 
systems~\cite{CY95},  probabilistic systems operating in open environments~\cite{SegalaT}, 
and under-specified probabilistic systems~\cite{BdA95}.
A \emph{specification} describes the set of desired behaviors of
the system, which in the verification and control of stochastic systems is 
typically an $\omega$-regular set of paths. 
The class of $\omega$-regular languages extends classical regular languages to 
infinite strings, and provides a robust specification language to express
all commonly used specifications, such as safety, liveness, fairness, etc.~\cite{Thomas97}. 
Parity objectives are a canonical way to define such $\omega$-regular specifications.
Thus MDPs with parity objectives provide the theoretical framework to 
study problems such as the verification and control of stochastic systems.

\smallskip\noindent{\bf Qualitative and quantitative analysis.} 
The analysis of MDPs with parity objectives can be classified into  
qualitative and quantitative analysis. 
Given an MDP with parity objective, the \emph{qualitative analysis} 
asks for the computation of the set of states from
where the parity objective can be ensured with probability~1 
(almost-sure winning).
The more general \emph{quantitative analysis} asks for the computation 
of the maximal probability at each state with which the controller can 
satisfy the parity objective. 

\smallskip\noindent{\bf Importance of qualitative analysis.} 
The qualitative analysis of MDPs is an important problem in verification 
that is of interest irrespective of the quantitative analysis problem.
There are many applications where we need to know whether the correct 
behavior arises with probability~1.
For instance, when analyzing a randomized embedded scheduler, we are
interested in whether every thread progresses with probability~1
\cite{EMSOFT05}.
Even in settings where it suffices to satisfy certain specifications with 
probability $p<1$, the correct choice of $p$ is a challenging problem, due 
to the simplifications introduced during modeling.
For example, in the analysis of randomized distributed algorithms it is 
quite common to require correctness with probability~1 
(see, e.g., \cite{PSL00,KNP_PRISM00,Sto02b}). 
Furthermore, in contrast to quantitative analysis, 
qualitative analysis is robust to numerical perturbations and modeling errors in the 
transition probabilities, and consequently the algorithms for qualitative analysis are 
combinatorial.
Finally, for MDPs with parity objectives, the best known algorithms and all
algorithms used in practice first perform the qualitative analysis, and then 
perform a quantitative analysis on the result of the qualitative 
analysis~\cite{CY95,luca-thesis,CJH04}. 
Thus qualitative analysis for MDPs with parity objectives is one of the most 
fundamental and core problems in verification of probabilistic systems.
One of the key challenges in probabilistic verification is to obtain efficient 
and symbolic algorithms for qualitative analysis of MDPs with parity 
objectives, as symbolic algorithms allow to handle MDPs with a large state space.

\smallskip\noindent{\bf Previous results.} The qualitative analysis for MDPs with 
parity objectives is achieved by iteratively applying solutions of the 
qualitative analysis of MDPs with B\"uchi objectives~\cite{CY95,luca-thesis,CJH04}.
The qualitative analysis of an MDP with a parity objective with $d$ priorities 
can be achieved  by $O(d)$ calls to an algorithm for qualitative analysis of MDPs with B\"uchi
objectives, and hence we focus on the qualitative analysis of MDPs with B\"uchi
objectives. 
The classical algorithm for qualitative analysis for MDPs with B\"uchi objectives
works in $O(n \cdot m)$ time, where $n$ is the number of states, and $m$ is the number of
edges of the MDP~\cite{CY95,luca-thesis}. 
The classical algorithm can be implemented symbolically, and it takes at most 
$O(n^2)$ symbolic steps. 
An improved algorithm for the problem was given in~\cite{CJH03} that works in 
$O(m \cdot \sqrt{m})$ time. 
The algorithm of~\cite{CJH03} crucially depends on maintaining the same number 
of edges in certain forward searches. Thus the algorithm needs to explore edges 
of the graph explicitly and is inherently non-symbolic.
A recent $O(m\cdot n^{2/3})$ time algorithm for the problem was given in~\cite{CMoH11};
however the algorithm requires the dynamic-tree data structure of Sleator-Tarjan~\cite{SleatorTarjan}, 
and such data structures cannot be implemented symbollically.
In the literature, there is no symbolic subquadratic algorithm for qualitative analysis of
MDPs with B\"uchi objectives.

\smallskip\noindent{\bf Our contribution.} 
In this work our main contributions are as follows.
\begin{enumerate}
\item We present a new and simpler subquadratic algorithm for qualitative analysis 
of MDPs with B\"uchi objectives that runs in $O(m \cdot \sqrt{m})$ time, and show 
that the algorithm can be implemented symbolically. 
The symbolic algorithm takes at most $O(n \cdot \sqrt{m})$ symbolic steps, and thus
we obtain the first symbolic subquadratic algorithm.
In practice, MDPs often have constant out-degree: for example, see~\cite{dAR07} for MDPs with large
state space but constant number of actions, or~\cite{FV97,Puterman94} for examples from inventory 
management where MDPs have constant number of actions (the number of actions correspond to 
the out-degree of MDPs).
For MDPs with constant out-degree our new symbolic algorithm takes 
$O(n \cdot \sqrt{n})$ symbolic steps, as compared to $O(n^2)$ symbolic steps of the 
previous best known algorithm.

\item All previous algorithms for the qualitative analysis of MDPs  with B\"uchi 
objectives iteratively discover states that are guaranteed to be not almost-sure 
winning, and only when the algorithm terminates the almost-sure winning set is 
discovered. We present a new algorithm (namely \emph{win-lose} algorithm) that iteratively 
discovers both states in the almost-sure winning set and its complement. 
Thus if the problem is to decide whether a given state $s$ is almost-sure 
winning, and the state $s$ is almost-sure winning, then the win-lose algorithm 
can stop at an intermediate iteration unlike all the previous algorithms. 
Our algorithm works in time $O( \sqrt{K_E} \cdot m)$ time, where $K_E$ is 
the maximal number of edges of any scc of the MDP 
(in this paper we write scc for maximal scc).
We also show that the win-lose algorithm can be implemented symbolically, and it takes at most
$O(\sqrt{K_E} \cdot n)$ symbolic steps.

\item  Our win-lose algorithm requires to compute the scc decomposition of a 
graph in $O(n)$ symbolic steps. 
The scc decomposition problem is one of the most fundamental problem 
in the algorithmic study of graph problems.
The symbolic scc decomposition problem has many other applications in 
verification: 
for example, checking emptiness of $\omega$-automata, and bad-cycle detection 
problems in model checking, see~\cite{BGS00} for other applications.
An $O(n \cdot \log n)$ symbolic step algorithm for scc decomposition was 
presented in~\cite{BGS00}, and the algorithm was improved in~\cite{GPP}.
The algorithm of~\cite{GPP} is a linear symbolic step scc decomposition algorithm that 
requires at most $\min\set{5\cdot n, 5 \cdot D \cdot N + N}$ symbolic steps, where $D$ is 
the diameter of the graph, and $N$ is the number of scc's of the graph.
We present an improved version of the symbolic scc decomposition algorithm.
Our algorithm improves the constants of the number of the linear symbolic steps.
Our algorithm requires at most 
$\min\set{ 3\cdot n + N, 5 \cdot D^* + N}$ symbolic steps, where $D^*$ is 
the sum of the diameters of the scc's of the graph.
Thus, in the worst case, the algorithm of~\cite{GPP} requires $5\cdot n$ symbolic steps,
whereas our algorithm requires $4\cdot n$ symbolic steps.
Moreover, the number of symbolic steps of our algorithm is always bounded by
the number of symbolic steps of the algorithm of~\cite{GPP} (i.e. our algorithm
is never worse).
\end{enumerate}
Our experimental results show that our new algorithms perform better than 
the previous known algorithms both for qualitative analysis of MDPs with 
B\"uchi objectives and symbolic scc computation.

\section{Definitions}
\label{section:definition}

\noindent{\bf Markov decision processes (MDPs).}
A \emph{Markov decision process (MDP)} $\gamegraph =((S, E), (\SA,\SR),\trans)$ 
consists of a directed graph $(S,E)$, a partition $(\SA$,$\SR)$ of the 
\emph{finite} set $S$ of states, and a probabilistic transition function 
$\trans$: $\SR \rightarrow \distr(S)$, where $\distr(S)$ denotes the 
set of probability distributions over the state space~$S$. 
The states in $\SA$ are the {\em player-$\PA$\/} states, where player~$\PA$
decides the successor state, and the states in $\SR$ are the 
{\em probabilistic (or random)\/} states, 
where the successor state is chosen according to the probabilistic transition
function~$\trans$. 
We assume that for $s \in \SR$ and $t \in S$, we have $(s,t) \in E$ 
iff $\trans(s)(t) > 0$, and we often write $\trans(s,t)$ for $\trans(s)(t)$. 
For a state $s\in S$, we write $E(s)$ to denote the set 
$\set{t \in S \mid (s,t) \in E}$ of possible successors.
For technical convenience we assume that every state in the graph 
$(S,E)$ has at least one outgoing edge, i.e., $E(s)\neq \emptyset$ for all $s \in S$.
We will denote by $n=|S|$ and $m=|E|$ the size of the state space and the number of 
transitions (or edges), respectively.
%

\smallskip\noindent{\bf Plays and strategies.}
An infinite path, or a \emph{play}, of the game graph $\gamegraph$ is an 
infinite 
sequence $\pat=\seq{s_0, s_1, s_2, \ldots}$ of states such that 
$(s_k,s_{k+1}) \in E$ for all $k \in \Nats$. 
We write $\Paths$ for the set of all plays, and for a state $s \in S$, 
we write $\Paths_s\subseteq\Paths$ 
for the set of plays that start from the state~$s$.
A \emph{strategy} for  player~$\PA$ is a function 
$\straa$: $S^*\cdot \SA \to \distr(S)$ that chooses the probability 
distribution over the successor states 
for all finite sequences $\vec{w} \in S^*\cdot \SA$ of states 
ending in a player-1 state (the sequence represents a prefix of a play).
A strategy must respect the edge relation: for all $\vec{w} \in S^*$ and 
$s \in \SA$, if $\straa(\vec{w}\cdot s)(t) >0$, then $t \in E(s)$.
A strategy is \emph{deterministic (pure)} if it chooses a unique successor
for all histories (rather than a probability distribution), otherwise
it is \emph{randomized}.
Player~$\PA$ follows the strategy~$\straa$ if in each player-1 
move, given that the current history of the game is
$\vec{w} \in S^* \cdot \SA$, she chooses the 
next state according to  $\straa(\vec{w})$.
We denote by $\Straa$ the set of all strategies for player~$\PA$.
A \emph{memoryless} player-1 strategy does not depend on the history of 
the play but only on the current state; i.e., for all $\vec{w},\vec{w'} \in 
S^*$ and for all $s \in \SA$ we have 
$\straa(\vec{w} \cdot s) =\straa(\vec{w}'\cdot s)$.
A memoryless strategy can be represented as a function 
$\straa$: $\SA \to \distr(S)$, and a pure memoryless 
strategy can be represented as $\straa: \SA \to S$.

Once a starting state  $s \in S$ and a strategy $\straa \in \Straa$ is fixed, 
the outcome of the MDP is a random walk $\pat_s^{\straa}$ for which the
probabilities of events are uniquely defined, where an \emph{event}  
$\Aa \subseteq \Paths$ is a measurable set of plays. 
For a state $s \in S$ and an event $\Aa\subseteq\Paths$, we write
$\Prb_s^{\straa}(\Aa)$ for the probability that a play belongs 
to $\Aa$ if the game starts from the state $s$ and player~1 follows
the strategy $\straa$.

\smallskip\noindent{\bf Objectives.}
We specify \emph{objectives} for the player~1 by providing
a set of \emph{winning} plays $\Phi \subseteq \Omega$.
We say that a play $\pat$ {\em satisfies} the objective
$\Phi$ if $\pat \in \Phi$.
We consider \emph{$\omega$-regular objectives}~\cite{Thomas97},
specified as parity conditions.
We also consider the special case of B\"uchi objectives.
\begin{itemize}

\item
 \emph{B\"uchi objectives.} Let $T$ be a set of target states.
  For a play $\pat = \seq{s_0, s_1, \ldots} \in \Omega$,
  we define $\Inf(\pat) =
  \set{s \in S \mid \mbox{$s_k = s$ for infinitely many $k$}}$
  to be the set of states that occur infinitely often in~$\pat$.
  The B\"uchi objectives require that some state of $T$ be visited 
  infinitely often, and defines the set of winning plays 
  $\Buchi(T)=\set{\pat \in \Paths \mid \Inf(\pat) \cap T \neq \emptyset}$.

\item
  \emph{Parity objectives.}
  For $c,d \in \nats$, we write $[c..d] = \set{c, c+1, \ldots, d}$.
  Let $p$: $S \to [0..d]$ be a function that assigns a \emph{priority}
  $p(s)$ to every state $s \in S$, where $d \in \Nats$.
  The \emph{parity objective} is defined as
  $\Parity(p)=
  \set{\pat \in \Paths \mid
  \min\big(p(\Inf(\pat))\big) \text{ is even }}$.
  In other words, the parity objective requires that the minimum 
  priority visited infinitely often is even.
  In the sequel we will use $\Phi$ to denote parity objectives.
\end{itemize}

\smallskip\noindent{\em Qualitative analysis: almost-sure winning.}
Given a player-1 objective~$\Phi$, a strategy $\straa\in\Sigma$ is  
\emph{almost-sure winning} for player~1 from the state $s$ 
if $\Prb_s^{\straa} (\Phi) =1$.
The \emph{almost-sure winning set} $\waa(\Phi)$ for player~1 is the set 
of states from which player~1 has an almost-sure winning strategy.
The qualitative analysis of MDPs correspond to the computation of 
the almost-sure winning set for a given objective $\Phi$.
It follows from the results of~\cite{CY95,luca-thesis} that for all MDPs and 
all reachability and parity objectives, if there is an almost-sure 
winning strategy, then there is a memoryless almost-sure winning strategy.
The qualitative analysis of MDPs with parity objectives is achieved by 
iteratively applying the solutions of qualitative analysis for MDPs with 
B\"uchi objectives~\cite{luca-thesis,CJH04}, and hence in this 
work we will focus on qualitative analysis for B\"uchi objectives.

\begin{theorem}[\cite{CY95,luca-thesis}]\label{thrm_memless}
For all MDPs $G$, and all reachability and parity objectives $\Phi$, 
there exists a pure memoryless strategy $\straa_*$ such that for all 
$s \in \waa(\Phi)$ we have $\Prb_s^{\straa_*}(\Phi)=1$.
\end{theorem}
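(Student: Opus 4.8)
The plan is to establish the statement in two stages: first for reachability objectives, then lift it to parity via the end-component structure of MDPs. For reachability to a target set $T$, write $W=\waa(\Reach(T))$. I would first characterize $W$ combinatorially as the largest \emph{closed} set $R\subseteq S$ (every $s\in R\cap\SR$ has $E(s)\subseteq R$, and every $s\in R\cap\SA$ has a successor in $R$) from which $T$ is reachable along edges staying inside $R$. Define a pure memoryless strategy $\straa_R$ on $W\cap\SA$ that picks, at each state, an edge strictly decreasing the graph distance to $T$ within $W$. Then from any $s\in W$ the play stays in $W$ and within $|S|$ steps reaches $T$ with probability at least $p_{\min}^{|S|}>0$, where $p_{\min}$ is the least positive transition probability; closedness lets the window argument iterate, so the probability of never reaching $T$ is at most $(1-p_{\min}^{|S|})^k\to 0$, giving almost-sure reachability from all of $W$ under one memoryless strategy.

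For parity, I would invoke the notion of an \emph{end-component}: a set $C$ inducing a strongly connected sub-MDP with $E(s)\subseteq C$ for $s\in C\cap\SR$ and $E(s)\cap C\neq\emptyset$ for $s\in C\cap\SA$. The crucial structural fact is that, under \emph{any} strategy and from any start, the set $\Inf(\pat)$ of states visited infinitely often is almost surely an end-component (a Borel--Cantelli argument shows that a state whose random successor leaves $\Inf(\pat)$ is almost surely abandoned). Call $C$ \emph{winning} if its minimum priority is even, and let $U$ be the union of maximal winning end-components. I would then show $\waa(\Parity(p))=\waa(\Reach(U))$: for $\supseteq$, applying the Stage-1 result \emph{inside} the strongly connected $C$ yields for each $t\in C$ a memoryless strategy reaching $t$ with probability $1$ while staying in $C$, and cycling these choices gives one pure memoryless \emph{recurrence} strategy $\straa_C$ that visits all of $C$ — hence its even minimum-priority state — infinitely often almost surely; for $\subseteq$, the structural fact forces almost every winning play's limit set to be a winning end-component, so almost every play reaches $U$.

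Finally, since maximal end-components are pairwise disjoint, I define $\straa_*$ to equal $\straa_C$ on each maximal winning end-component $C\subseteq U$ and the reachability strategy $\straa_R$ toward $U$ on $\waa(\Parity(p))\setminus U$; this is a single pure memoryless strategy. From any $s\in\waa(\Parity(p))$ the reachability phase enters $U$ with probability $1$, each winning end-component is closed under $\straa_*$ so the play is absorbed once it enters some $C$, and $\straa_C$ then forces the parity condition, yielding $\Prb_s^{\straa_*}(\Parity(p))=1$. The main obstacle is the structural characterization of $\Inf(\pat)$ as an end-component together with the clean composition of the two phases into one memoryless strategy — specifically, verifying that under $\straa_R$ the play is absorbed into a \emph{winning} end-component with probability $1$ rather than drifting among transient states forever.
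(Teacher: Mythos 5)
The paper itself does not prove Theorem~\ref{thrm_memless}: it is imported from \cite{CY95,luca-thesis}, so there is no internal proof to compare against. Your proposal follows the standard route of those references: memoryless almost-sure reachability via a distance-decreasing strategy and the $p_{\min}^{|S|}$ window argument (this is essentially the same argument the paper reuses inside the proof of Lemma~\ref{lem:correct}), then parity via end components, the almost-sure characterization of $\Inf(\pat)$ as an end component, and the identity $\waa(\Parity(p))=\waa(\Reach(U))$. The overall architecture is sound, the two-phase composition into one memoryless strategy is legitimate, and your closing worry is already answered by Stage~1: $\straa_R$ is an almost-sure reachability strategy for $U$, and once the play enters a winning end component the inner strategy never leaves it, so there is no ``drifting among transient states.''

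The genuine gap is the recurrence strategy inside a winning end component $C$. You claim that taking, for each $t\in C$, the Stage-1 strategy toward $t$ and ``cycling these choices'' yields a single \emph{pure memoryless} strategy under which \emph{all} of $C$ is visited infinitely often almost surely. This fails twice over. First, cycling through several memoryless strategies needs a counter, hence memory, so the composite is not memoryless. Second, the statement it is meant to establish is false: take $C=\{a,b,c\}$ with all three states in $\SA$ and edges $a\to b$, $a\to c$, $b\to a$, $c\to a$; this is a strongly connected end component, yet any pure memoryless strategy fixes one successor of $a$, so the play visits only $\{a,b\}$ or only $\{a,c\}$ infinitely often. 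The repair is small and standard: parity does not require recurrence of all of $C$, only of one state $t$ of minimum (even) priority. Apply Stage~1 inside the sub-MDP induced by $C$ with target $\{t\}$; since $C$ is an end component this sub-MDP is well defined ($E(s)\subseteq C$ for $s\in C\cap\SR$), the strategy keeps the play in $C$, and it reaches $t$ with probability~1 from \emph{every} state of $C$, so being one fixed memoryless strategy it visits $t$ infinitely often almost surely, and since every priority in $C$ is at least $p(t)$ the parity condition holds. One further caveat: $U$ must be the union of \emph{all} winning end components (equivalently, of the inclusion-maximal \emph{winning} ones, which are pairwise disjoint), not of maximal end components that happen to be winning --- a winning sub-end-component can sit inside a losing maximal one, and under that reading the identity $\waa(\Parity(p))=\waa(\Reach(U))$ breaks. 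With these two corrections your proof goes through.
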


\noindent{\bf Scc and bottom scc.} Given a graph $G=(S,E)$, a set $C$ 
of states is an scc if for all $s,t \in C$ there is a path from 
$s$ to $t$ going through states in $C$. 
An scc $C$ is a bottom scc if for all $s \in C$ all out-going edges 
are in $C$, i.e., $E(s) \subseteq C$.

\smallskip\noindent{\bf Markov chains, closed recurrent sets.} A Markov chain is a 
special case of MDP with $\SA=\emptyset$, and hence for simplicity a Markov 
chain is a tuple $((S,E),\trans)$ with a probabilistic transition function 
$\trans:S \to \distr(S)$, and $(s,t) \in E$ iff $\trans(s,t)>0$.
A \emph{closed recurrent} set $C$ of a Markov chain is a bottom scc 
in the graph $(S,E)$.
Let $\calc= \bigcup_{C \text{ is closed recurrent}} C$. 
It follows from the results on Markov chains~\cite{Kemeny} that for all 
$s \in S$, the set $\calc$ is reached with probability~1 in finite time,
and for all $C$ such that $C$ is closed recurrent, for all $s \in C$ and 
for all $t \in C$, if the starting state is $s$, then the state $t$ is 
visited infinitely often with probability~1.

\smallskip\noindent{\bf Markov chain from a MDP and memoryless strategy.} 
Given a MDP $G=((S,E),(\SA,\SR),\trans)$ and a memoryless strategy 
$\straa_*:\SA \to \distr(S)$ we obtain a Markov chain $G'=((S,E'),\trans')$ as follows:
$E' = E \cap (\SR \times S) \cup \set{(s,t) \mid s \in \SA, \straa_*(s)(t) > 0}$;
and 
$\trans'(s,t)=\trans(s,t)$ for $s \in \SR$, and 
$\trans'(s,t)=\straa(s)(t)$  for $s\in \SA$ and $t \in E(s)$. 
We will denote by $G_{\straa_*}$ the Markov chain obtained from an MDP $G$ 
by fixing a memoryless strategy $\straa_*$ in the MDP.

\smallskip\noindent{\bf Symbolic encoding of an MDP.} All algorithms of the 
paper will only depend on the graph $(S,E)$ of the MDP and the partition 
$(\SA,\SR)$, and not on the probabilistic transition function $\trans$. 
Thus the symbolic encoding of an MDP is obtained as the standard encoding 
of a transition system (with an \obdd~\cite{CuDD}), with one additional bit, 
and the bit denotes whether a state belongs to $\SA$ or $\SR$. 
Also note that if the state bits already encode whether a state belongs to 
$\SA$ or $\SR$, then the additional bit is not required.

\smallskip\noindent{\bf Symbolic step.} To define the symbolic complexity of
an algorithm an important concept to clarify is the notion of one symbolic
step. 
In this work we adopt the following convention: one symbolic step corresponds
to one primitive operations that are supported by the standard symbolic package like CuDD~\cite{CuDD}.
For example, the one-step predecessor and successor operators, obtaining a BDD for a cube 
(a path from root to a leaf node with constant~1) of a BDD, etc. are all supported as 
primitive operations in CuDD~\cite{CuDD} and correspond to one symbolic step.

\newcommand{\impa}{{\sc ImprAlgo}}
\newcommand{\simpa}{{\sc SymbImprAlgo}}
\newcommand{\oimpa}{{\sc SmDvSymbImprAlgo}}
\newcommand{\classical}{{\sc Classical}}
\newcommand{\Pre}{\mathsf{Pre}}
\newcommand{\pre}{\mathsf{Pre}}
\newcommand{\post}{\mathsf{Post}}
\newcommand{\CPre}{\mathsf{CPre}}
\newcommand{\Post}{\mathsf{Post}}
\newcommand{\BDD}{\mathsf{BDD}}
\newcommand{\SymbStep}{\mathsf{SymbStep}}

\section{Symbolic Algorithms for B\"uchi Objectives}\label{sec:symbolic-buchi}
In this section we will present a new improved algorithm for the qualitative 
analysis of MDPs with B\"uchi objectives, and then present a symbolic 
implementation of the algorithm. 
Thus we obtain the first symbolic subquadratic algorithm for the problem.
We start with the notion of \emph{attractors} that is crucial for our 
algorithm.

\smallskip\noindent{\bf Random and player~1 attractor.} 
Given an MDP $G$, let $U \subseteq S$ be a subset of states. 
The \emph{random attractor} $\attr_{R}(U)$ is defined inductively as follows: 
$X_0=U$, and for $i\geq 0$, let  
$X_{i+1}=X_i \cup \set{s \in \SR \mid E(s) \cap X_i \neq \emptyset } \cup 
\set{s \in \SA \mid E(s) \subseteq X_i}$.
In other words, $X_{i+1}$ consists of (a)~states in $X_i$, 
(b)~player-1 states whose all successors are in $X_i$ and 
(c)~random states that have at least one edge to $X_i$.
Then $\attr_{R}(U)=\bigcup_{i\geq 0} X_i$. 
The definition of \emph{player-1 attractor} $\attr_1(U)$ is analogous 
and is obtained by exchanging the role of random states and 
player~1 states in the above definition.

\smallskip\noindent{\bf Property of attractors.} 
Given an MDP $G$, and set $U$ of states, let $A=\attr_{R}(U)$. 
Then from $A$ player~1 cannot ensure to avoid $U$, in other words, for 
all states in $A$ and for all player~1 strategies, the set $U$ is 
reached with positive probability.
For $A=\attr_1(U)$ there is a player~1 memoryless strategy to ensure 
that the set $U$ is reached with certainty.
The computation of random and player~1 attractors is the computation of 
alternating reachability and can be achieved in $O(m)$ time~\cite{Immerman81}, 
and can be achieved in $O(n)$ symbolic steps.

\subsection{A new subquadratic algorithm} 
The classical algorithm for computing the almost-sure winning set in 
MDPs with B\"uchi objectives has $O(n \cdot m)$ running time, and 
the symbolic implementation of the algorithm takes at most $O(n^2)$ 
symbolic steps. 
A subquadratic algorithm, with $O(m \cdot \sqrt{m})$ running time, for 
the problem was presented in~\cite{CJH03}. 
The algorithm of~\cite{CJH03} uses a mix of backward exploration and 
forward exploration. Every forward exploration step consists of executing a set
of DFSs (depth first searches) simultaneously for a specified number of 
edges, and must maintain the exploration of the same number of edges in each 
of the DFSs.
The algorithm thus depends  crucially on maintaining the number of edges 
traversed explicitly, and hence the algorithm has no symbolic implementation.
In this section we present a new subquadratic algorithm to compute 
$\waa(\Buchi(T))$.
The algorithm is simpler as compared to the algorithm of~\cite{CJH03} 
and we will show that our new algorithm can be implemented symbolically. 
Our new algorithm has some similar ideas as the algorithm of~\cite{CJH03} in 
mixing backward and forward exploration, 
but the key difference is that the new algorithm never stops the forward 
exploration after a certain number of edges, and hence need not maintain the
traversed edges explicitly. 
Thus the new algorithm is simpler, and our correctness and running time 
analysis proofs are different.
We show that our new algorithm works in $O(m \cdot \sqrt{m})$ time, and 
requires at most $O(n \cdot \sqrt{m})$ symbolic steps.

\smallskip\noindent{\bf Improved algorithm for almost-sure B\"uchi.}
Our algorithm iteratively removes states from the graph, until the almost-sure
winning set is computed. 
At iteration $i$, we denote the remaining subgraph as $(S_i,E_i)$, where 
$S_i$ is the set of remaining states, $E_i$ is the set of remaining 
edges, and the set of remaining target states is $T_i$ (i.e., $T_i= S_i \cap T$).
The set of states removed will be denoted by $Z_i$, i.e., $S_i=S \setminus Z_i$.
The algorithm will ensure that (a)~$Z_i \subseteq S \setminus \waa(\Buchi(T))$;
and (b)~for all $s \in S_i \cap \SR$ we have $E(s) \cap Z_i=\emptyset$.
In every iteration the algorithm identifies a set $Q_i$ of states such that
there is no path from $Q_i$ to the set $T_i$.
Hence clearly $Q_i \subseteq S \setminus \waa(\Buchi(T))$. 
By the random attractor property from $\attr_R(Q_i)$ the set $Q_i$ is reached 
with positive probability against any strategy for player~1.
The algorithm maintains 
the set $L_{i+1}$ of states that were removed from the graph since (and including) 
the last iteration of Case~1, and the set $J_{i+1}$ of states that lost an edge to states 
removed from the graph since the last iteration of Case~1. 
Initially $L_0:= J_0:=\emptyset$, $Z_0:=\emptyset$, and let $i:=0$ and we describe the 
iteration $i$ of our algorithm. We call our algorithm \impa\ 
(Improved Algorithm) and the pseudocode is given as Algorithm~\ref{algorithm:ImprAlgo}.

\begin{enumerate}
\item \emph{Case~1.} If ($(|J_i| > \sqrt{m})$ or $i=0$), then 
\begin{enumerate}
\item Let $Y_i$ be the set of states that can reach the current target set $T_i$ (this can 
be computed in $O(m)$ time by a graph reachability algorithm).

\item Let $Q_i:= S_i \setminus Y_i$, i.e., there is no path from $Q_i$ to $T_i$.

\item $Z_{i+1}:= Z_i \cup \attr_{R}(Q_i)$. The set $\attr_{R}(Q_i)$ 
is removed from the graph.

\item The set $L_{i+1}$ is the set of states removed from the graph in
this iteration (i.e., $L_{i+1}:= \attr_{R}(Q_i)$) and $J_{i+1}$ 
be the set of states in the remaining graph with an edge to $L_{i+1}$. 

\item If $Q_i$ is empty, the algorithm stops, otherwise $i:=i+1$ and go to the next iteration.

\end{enumerate}

\item \emph{Case 2.} Else  ($|J_i| \leq \sqrt{m}$ and $i>0$), then 
\begin{enumerate}

\item We do a lock-step search from every state $s$ in $J_i$ as follows: 
we do a DFS from $s$ and (a)~if the DFS tree reaches a state in $T_i$, then we
stop the DFS search from $s$; and (b)~if the DFS is completed without 
reaching a state in $T_i$, then we stop the entire lock-step search, 
and all states in the DFS tree are identified as $Q_i$.
The set $\attr_{R}(Q_i)$ is removed from the graph and $Z_{i+1}:=Z_i \cup \attr_{R}(Q_i)$.
If DFS searches from all states $s$ in $J_i$ reach the set $T_i$, 
then the algorithm stops.

\item The set $L_{i+1}$ is the set of states removed from the graph since 
the last iteration of Case~1 (i.e., $L_{i+1} := L_i \cup \attr_R(Q_i)$, 
where $Q_i$ is the DFS tree that stopped without reaching $T_i$ in the previous step  
of this iteration) and $J_{i+1}$ be the set of states in the remaining graph with 
an edge to $L_{i+1}$, i.e., $J_{i+1} := (J_i \setminus \attr_{R}(Q_i)) \cup X_i$, 
where $X_i$ is the subset of states of $S_i$ with an edge to $\attr_{R}(Q_i)$. 

\item $i:=i+1$ and go to the next iteration.

\end{enumerate}

\end{enumerate}

\begin{algorithm}[ht]
\caption{\bf ImprAlgo}
\label{algorithm:ImprAlgo}
{ 
\begin{tabbing}
aa \= aa \= aa \= aa \= aa \= aa \= aa \= aa \= aa \= aa \= aa \= aa \kill
\> \textbf{Input}: An MDP $G=((S, E), (\SA,\SR),\trans)$  with B\"uchi set $T$. \\
\> \textbf{Output}: $\waa(\Buchi(T))$, i.e., the almost-sure winning set for player~1. \\ 
\> 1. $i := 0$; $S_{0} := S$; $E_{0} := E$; $T_{0} := T$;  \\
\> 2. $L_{0} := Z_{0} := J_{0} := \emptyset $; \\ 
\> 3. \textbf{if} $(|J_{i}| > \sqrt{m}$ or $i = 0)$ \textbf{then}\\
\> \> 3.1. $Y_{i} :=$ {\bf Reach}$(T_{i}, (S_{i}, E_{i}))$; (i.e., compute the set $Y_i$ that can reach $T_i$ in the graph $(S_i,E_i)$) \\
\> \> 3.2. $Q_{i} := S_{i} \setminus Y_{i}$; \\
\> \> 3.3. \textbf{if} $(Q_{i} = \emptyset)$ \textbf{then} \textbf{goto} line $6$; \\
\> \> 3.4. \textbf{else} \textbf{goto} line $5$; \\
\> 4. \textbf{else} (i.e., $J_i \leq \sqrt{m}$ and $i>0$) \\
\> \> 4.1. \textbf{for each} $s \in J_{i}$ \\
\> \> \> 4.1.1. $\mathit{DFS}_{i,s} := s$; (initializing DFS-trees)\\
\> \> 4.2.  \textbf{for each} $s \in J_{i}$ \\
\> \> \> 4.2.1. Do 1 step of DFS from $\mathit{DFS}_{i,s}$, unless it has encountered a state from $T_{i}$\\
\> \> \> 4.2.2. If DFS encounters a state from $T_{i}$, mark that DFS as stopped \\ 
\> \> \> 4.2.3. \textbf{if} DFS completes without meeting $T_{i}$ \textbf{then} \\
\> \> \> \> 4.2.3.1. $Q_{i} := \mathit{DFS}_{i,s}$; \\
\> \> \> \> 4.2.3.2. \textbf{goto} line $5$; \\ 
\> \> \> 4.2.4. \textbf{if} all DFSs meet $T_{i}$ \textbf{then} \\
\> \> \> \> 4.2.4.1. \textbf{goto} line $6$; \\

\> 5. Removal of attractor of $Q_i$ in the following steps \\ 
\> \> 5.1 $Z_{i+1} := Z_{i} \cup \attr_{R}(Q_{i}, (S_{i},E_i),(\SA\cap S_i,\SR \cap S_i))$; \\
\> \> 5.2. $S_{i+1} := S_{i} \setminus Z_{i+1}$; $E_{i+1} := E_{i} \cap S_{i+1} \times S_{i+1}$; \\
\> \> 5.4. \textbf{if} the last goto call from step 3.4 (i.e. Case 1 is executed) \textbf{then} \\
\> \> \> 5.4.1 $L_{i+1} := \attr_{R}(Q_{i}, (S_{i},E_i),(\SA\cap S_i,\SR \cap S_i))$;\\
\> \> 5.5. \textbf{else } $L_{i+1} := L_i \cup \attr_{R}(Q_{i}, (S_{i},E_i),(\SA\cap S_i,\SR \cap S_i))$;\\
\> \> 5.6  $J_{i+1} := E^{-1}(L_{i+1}) \cap S_{i+1}$; \\
\> \> 5.7. $i := i + 1$; \\
\> \> 5.8.  \textbf{goto} line $3$; \\

\> 6. \textbf{return} $S\setminus Z_{i}$; \\
\end{tabbing}
}
\end{algorithm}

\smallskip\noindent{\bf Correctness and running time analysis.}
We first prove the correctness of the algorithm.

\begin{lemma}\label{lem:correct}
Algorithm \impa\ correctly computes the set $\waa(\Buchi(T))$. 
\end{lemma}
\begin{proof} 
We consider an iteration $i$ of the algorithm.
Recall that in this iteration $Y_i$ is the set of states that can reach 
$T_i$ and $Q_i$ is the set of states with no path to $T_i$.
Thus the algorithm ensures that in every iteration $i$, for the set of states 
$Q_i$ identified by the algorithm there is no path to the set $T_i$, 
and hence from $Q_i$ the set $T_i$ cannot be reached with positive 
probability.
Clearly, from $Q_i$ the set $T_i$ cannot be reached with probability~1.
Since from $\attr_R(Q_i)$ the set $Q_i$ is reached with positive 
probability against all strategies for player~1, it follows that from 
$\attr_{R}(Q_i)$ the set $T_i$ cannot be ensured to be reached with 
probability~1. 
Thus for the set $Z_i$ of removed states  we have 
$Z_i\subseteq S \setminus \waa(\Buchi(T))$.
It follows that all the states removed by the algorithm over all iterations are 
not part of the almost-sure winning set.

To complete the correctness argument we show that when the algorithm 
stops, the remaining set is $\waa(\Buchi(T))$. 
When the algorithm stops, let $S_*$ be the set of remaining states and 
$T_*$ be the set of remaining target states.
It follows from above that $S\setminus S_* \subseteq S \setminus \waa(\Buchi(T))$ and 
to complete the proof we show $S_* \subseteq \waa(\Buchi(T))$.
The following assertions hold: (a)~for all $s \in S_* \cap \SR$ we have
$E(s) \subseteq S_*$, and (b)~for all states $s \in S_*$ there is a path 
to the set $T_*$. 
We prove (a) as follows: whenever the algorithm removes a set $Z_i$, it is a random attractor,
and thus if a state $s \in S_* \cap \SR$ has an edge $(s,t)$ with $t \in S \setminus S_*$, 
then $s$ would have been included in $S \setminus S_*$, and thus (a) follows.
We prove (b) as follows: (i)~If the algorithm stops in Case~1, 
then $Q_i=\emptyset$, and it follows 
that every state  in $S_*$ can reach $T_*$.
(ii)~We now consider the case when the algorithm stops in Case~2: 
In this case every state in $J_i$ has a path to $T_i=T_*$, this is because if there 
is a state $s$ in $J_i$ with no path to $T_i$, then the DFS tree from $s$ would have been
identified as $Q_i$ in step 2 (a) and the algorithm would not have stopped.
It follows that there is no bottom scc in the graph induced by $S_*$ that does not 
intersect $T_*$: because if there is a bottom scc that does not contain 
a state from $J_i$ and also does not contain a target state, then it would have 
been identified in the last iteration of Case~1.
Since every state in $S_*$ has an out-going edge, 
it follows every state in $S_*$ has a path to $T_*$. 
Hence (b) follows.
Consider a shortest path (or the BFS tree) from all states in $S_*$ to 
$T_*$, and for a state $s \in S_* \cap \SA$, let $s'$ be the successor for 
the shortest path, and we consider the pure memoryless strategy $\straa_*$ 
that chooses the shortest path successor for all states 
$s \in (S_* \setminus T_*) \cap \SA$, and in states in $T_* \cap \SA$ choose
any successor in $S_*$.
Let $\ell=|S_*|$ and let $\alpha$ be the minimum of the positive transition 
probability of the MDP.
For all states $s \in S_*$, the probability that $T_*$ is reached within $\ell$ 
steps is at least $\alpha^\ell$, and it follows that the probability that $T_*$
is not reached within $k \times \ell$ steps is at most $(1-\alpha^\ell)^k$, 
and this goes to~0 as $k$ goes to $\infty$. 
It follows that for all $s \in S_*$ the pure memoryless strategy $\straa_*$ 
ensures that $T_*$ is reached with probability~1.
Moreover, the strategy ensures that $S_*$ is never left, and hence it follows
that $T_*$ is visited infinitely often with probability~1.
It follows that $S_* \subseteq \waa(\Buchi(T_*)) \subseteq \waa(\Buchi(T))$ and 
hence the correctness follows.
\qed
\end{proof}

We now analyze the running time of the algorithm. 
\begin{lemma}
Given an MDP $G$ with $m$ edges, Algorithm \impa\ takes $O(m\cdot \sqrt{m})$ time.
\end{lemma}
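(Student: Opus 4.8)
The plan is to split the total running time into the cost incurred in Case~1 iterations and the cost incurred in Case~2 iterations, and to bound each separately by $O(m \cdot \sqrt{m})$ through an edge-charging argument that exploits the fact that every state removed from the graph is removed permanently, so that each edge is destroyed at most once over the whole execution. Throughout I would use the two guards of the algorithm: a Case~1 iteration (other than the initial one at $i=0$) is entered only when $|J_i| \geq \sqrt{m}$, while a Case~2 iteration is entered only when $|J_i| \leq \sqrt{m}$.

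First I would bound the number of Case~1 iterations by $O(\sqrt{m})$. The key observation is that $J_i$ consists of states that acquired an edge to a state removed since the previous Case~1 iteration; since removed states never reappear, the edges witnessing membership in $J_i$ across distinct Case~1 epochs are pairwise distinct, and each state of $J_i$ contributes at least one such edge. Hence each Case~1 trigger consumes at least $\sqrt{m}$ distinct destroyed edges, and as there are only $m$ edges in total there can be at most $m/\sqrt{m} = \sqrt{m}$ such triggers (plus the single $i=0$ iteration). Each Case~1 iteration performs a backward reachability computation for $Y_i$ and a random-attractor computation for $\attr_{R}(Q_i)$, both in $O(m)$ time, so the total Case~1 cost is $O(\sqrt{m}) \cdot O(m) = O(m \cdot \sqrt{m})$.

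Next I would bound the Case~2 cost, whose dominant term is the lock-step DFS. Because the searches from the (at most $\sqrt{m}$) states of $J_i$ proceed in lock-step, at the moment the first DFS completes without reaching $T_i$ --- having examined, say, $k_i$ edges and exposing the tree $Q_i$ --- every other DFS has examined at most $k_i$ edges; the per-iteration lock-step cost is therefore $O(|J_i| \cdot k_i) = O(\sqrt{m} \cdot k_i)$. Since each edge counted in $k_i$ has its source in $Q_i$, and the states of $Q_i \subseteq \attr_{R}(Q_i)$ are removed at the end of iteration $i$, the quantities $k_i$ are charged to disjoint edge sets, giving $\sum_i k_i \leq m$. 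Summing yields $O(\sqrt{m}) \cdot \sum_i k_i = O(m \cdot \sqrt{m})$ for all lock-step work. The remaining Case~2 work --- computing $\attr_{R}(Q_i)$ and updating $J_{i+1}$ by locating the states $X_i$ with an edge into $\attr_{R}(Q_i)$ --- can be charged to destroyed edges: maintaining an out-degree counter per player-$\PA$ state so that each edge is processed a constant number of times across all attractor computations bounds this work by $O(m)$ in total. Adding the two cases gives the claimed $O(m \cdot \sqrt{m})$ bound.

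The main obstacle I anticipate is the Case~2 analysis, and specifically making the lock-step accounting rigorous: one must argue that the synchronization really does cap every concurrent search at $k_i$ edges, so that the factor $|J_i| \leq \sqrt{m}$ can be pulled out, and simultaneously that the trees $Q_i$ removed across Case~2 iterations are edge-disjoint so that $\sum_i k_i \leq m$. Care is also needed to ensure that the auxiliary attractor and $J$-update computations are amortized to $O(m)$ in total rather than charged $O(m)$ per iteration, since the number of Case~2 iterations can itself be as large as $\Theta(n)$.
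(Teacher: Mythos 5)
Your proposal is correct and follows essentially the same route as the paper: bound Case~1 by observing that each execution (after $i=0$) witnesses at least $\sqrt{m}$ permanently destroyed edges, and bound Case~2 by the lock-step argument that caps the per-iteration work at $|J_i|\cdot|E(Q_i)| \leq \sqrt{m}\cdot|E(Q_i)|$ and charges $\sqrt{m}$ to each edge of the removed tree $Q_i$. Your additional care in amortizing the attractor computations and $J$-updates via degree counters is a detail the paper's proof glosses over, but it is a refinement of the same argument rather than a different approach.
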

\begin{proof}
The total work of the algorithm, when Case~1 is executed, over all 
iterations is at most $O(\sqrt{m} \cdot m)$: this follows because between 
two iterations of Case~1 at least $\sqrt{m}$ edges must have been removed
from the graph (since $|J_i| > \sqrt{m}$ everytime Case~1 is executed 
other than the case when $i=0$),  and hence Case~1 can be executed at most 
$m/\sqrt{m}=\sqrt{m}$ times.
Since each iteration can be achieved in $O(m)$ time, the $O(m\cdot\sqrt{m})$ 
bound for Case~1 follows.
We now show that the total work of the algorithm, when Case~2 is executed, 
over all iterations is at most $O(\sqrt{m} \cdot m)$. The argument is as 
follows: consider an iteration $i$ such that Case~2 is executed.
Then we have $|J_i| \leq \sqrt{m}$.
Let $Q_i$ be the DFS tree in iteration $i$ while executing 
Case~2, and let $E(Q_i)= \cup_{s \in Q_i} E(s)$. The lock-step search ensures 
that the number of edges explored in this iteration is at most 
$|J_i| \cdot |E(Q_i)| \leq \sqrt{m} \times |E(Q_i)|$. 
Since $Q_i$ is removed from the graph we \emph{charge} the work of $\sqrt{m} \cdot |E(Q_i)|$ 
to edges in $E(Q_i)$, charging work $\sqrt{m}$ to each edge. 
Since there are at most $m$ edges, the total charge of the work over all 
iterations when Case~2 is executed is at most $O(m \cdot \sqrt{m})$. 
Note that if instead of $\sqrt{m}$ we would have used a bound $k$ in distinguishing 
Case~1 and Case~2, we would have achieved a running time bound of 
$O(m^2/k + m\cdot k)$, which is optimized by $k=\sqrt{m}$.
Our desired result follows.
\qed
\end{proof}
This gives us the following result.

\begin{theorem}\label{thrm_as_reach}
Given an MDP $G$ and a set $T$ of target states, the algorithm \impa\  
correctly computes the set $\waa(\Buchi(T))$ in time $O(m \cdot \sqrt{m})$.
\end{theorem}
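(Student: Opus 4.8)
The plan is to observe that Theorem~\ref{thrm_as_reach} is precisely the conjunction of the two preceding lemmas, so I would simply combine them. Lemma~\ref{lem:correct} already establishes correctness, namely that \impa\ outputs exactly $\waa(\Buchi(T))$, and the running-time lemma already supplies the $O(m \cdot \sqrt{m})$ bound; the theorem then follows immediately by assembling the two.

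For completeness I would recall the two invariants that drive correctness. First, every set deleted by the algorithm is a random attractor $\attr_R(Q_i)$ of a set $Q_i$ from which the current target $T_i$ is unreachable; by the random-attractor property player~1 cannot avoid entering $Q_i$ with positive probability, so $T_i$ cannot be reached with probability~$1$ from $\attr_R(Q_i)$, giving $\attr_R(Q_i) \subseteq S \setminus \waa(\Buchi(T))$. Second, on termination the remaining set $S_*$ satisfies (a)~every random state in $S_*$ keeps all its edges inside $S_*$, since only attractors are removed, and (b)~every state of $S_*$ has a path to the surviving target set $T_*$, since no target-free bottom scc can survive. These two properties let me fix the shortest-path pure memoryless strategy and bound the probability of not reaching $T_*$ within $k \cdot \ell$ steps by $(1-\alpha^\ell)^k \to 0$, certifying $S_* \subseteq \waa(\Buchi(T))$.

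For the running time I would recombine the two charging arguments. Case~1 is entered at most $\sqrt{m}$ times, because between consecutive Case~1 iterations at least $\sqrt{m}$ edges are deleted (enforced by the guard $|J_i| \geq \sqrt{m}$), and each such iteration costs $O(m)$; this contributes $O(m \cdot \sqrt{m})$. For Case~2 the lock-step search over $|J_i| \leq \sqrt{m}$ sources explores at most $\sqrt{m} \cdot |E(Q_i)|$ edges before a removable tree $Q_i$ is identified, and charging $\sqrt{m}$ to each of the edges of $E(Q_i)$, which then permanently leave the graph, again yields $O(m \cdot \sqrt{m})$ over the at most $m$ edges.

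The part demanding the most care is the Case~2 running-time charge, since it is the only place where the lock-step discipline is essential: one must argue that the $\sqrt{m}$-fold overhead of advancing all sources simultaneously is absorbed by the permanent deletion of $E(Q_i)$, so that no edge is charged more than once across all iterations. Given the two lemmas this accounting is already in hand, so the theorem itself presents no further obstacle beyond quoting them together.
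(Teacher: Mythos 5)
Your proposal matches the paper's own proof exactly: the theorem is stated there as an immediate consequence of Lemma~\ref{lem:correct} and the running-time lemma, and your recap of both (attractor removal for soundness, the shortest-path pure memoryless strategy with the $(1-\alpha^\ell)^k \to 0$ bound for completeness, and the two $\sqrt{m}$-charging schemes) follows the paper's reasoning step for step. The only gloss is that after establishing that $T_*$ is reached with probability~1 you should also note, as the paper does, that the strategy never leaves $S_*$, so the reachability argument repeats forever and $T_*$ is visited infinitely often --- which is what the B\"uchi objective, as opposed to mere reachability, requires.
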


\subsection{Symbolic implementation of \impa}
\label{sec:symimpa}
In this subsection we will a present symbolic implementation of each of the steps of algorithm 
\impa. 
The symbolic algorithm depends on the following symbolic operations that can 
be easily achieved with an \obdd\ implementation.
For a set $X \subseteq S$ of states, let
\[
\begin{array}{rcl}
\Pre(X) & = & \set{s \in S \mid E(s) \cap X \neq \emptyset}; \\ 
\Post(X) & = &  \set{t \in S \mid  t \in \bigcup_{s \in  X} E(s)}; \\
\CPre(X) & = & \set{s \in \SR \mid E(s) \cap X \neq \emptyset} \cup \set{s \in \SA \mid E(s) \subseteq X}.
\end{array}
\]
In other words, $\Pre(X)$ is the predecessors of states in $X$;
$\Post(X)$ is the successors of states in $X$; and 
$\CPre(X)$ is the set of states $Y$ such that for every random state in $Y$ 
there is a successor in $X$, and for every player 1 state in $Y$ all successors
are in $Y$.

We now present a symbolic version of \impa. 
For the symbolic version the basic steps are as follows: 
(i)~Case 1 of the algorithm is same as Case~1 of \impa, and 
(ii)~Case 2 is similar to Case~2 of \impa, and the only change in Case 2 is 
instead of lock-step search exploring the same number of edges, we have lock-step search
that executes the same number of symbolic steps. 
The details of the symbolic implementation are as follows,
and we will refer to the algorithm as \simpa.
\begin{enumerate}

\item \emph{Case 1.} In Case 1(a) we need to compute reachability 
to a target set $T$. 
The symbolic implementation is standard and done as follows:
$X_0=T$ and $X_{i+1} := X_i \cup \Pre(X_i)$ until $X_{i+1} = X_i$.
The computation of the random attractor is also standard and is achieved as above 
replacing $\Pre$ by $\CPre$.
It follows that every iteration of Case 1 can be achieved in $O(n)$ symbolic 
steps.

\item \emph{Case 2.} For analysis of Case 2 we present a symbolic 
implementation of the lock-step forward search. 
The lock-step ensures that each search executes the same number of 
symbolic steps.
The implementation of the forward search from a state $s$ in iteration $i$ is 
achieved as follows: 
$P_0:=\set{s}$ and $P_{j+1}:= P_j \cup \Post(P_j)$ unless $P_{j+1}=P_j$ or 
$P_j \cap T_i\neq \emptyset$. 
If $P_{j} \cap T_i \neq \emptyset$, then the forward search is stopped from $s$.
If $P_{j+1} =P_j$ and $P_j \cap T_i=\emptyset$, then we have identified that 
there is no path from states in $P_j$ to $T_i$. 
 
\item \emph{Symbolic computation of cardinality of sets.} 
The other key operation required by the algorithm is determining whether 
the  size of set $J_i$ is at least $\sqrt{m}$ or not. 
Below we describe the details of this symbolic operation.

\end{enumerate}

\noindent{\bf Symbolic computation of cardinality.} 
Given a symbolic description of a set $X$ and a number $k$, 
our goal is to determine whether $|X| \leq k$. 
A naive way is to check for each state, whether it belongs to $X$.  
But this takes time proportional to the size of state space and 
also is not symbolic.
We require a procedure that uses the structure of a BDD and directly finds the states which this BDD represents.
It should also take into account that if more than $k$ states are already 
found, then no more computation is required.
We present the following procedure to accomplish the same. 
A \emph{cube} of a BDD is a path from root node to leaf node where the leaf node is the constant 1 (i.e. true).
Thus, each cube represents a set of states present in the BDD which are exactly the states found by doing every possible assignment of the variables not occurring in the cube.
For an explicit implementation: consider a procedure that uses Cudd\_ForEachCube 
(from CUDD package, see~\cite{CuDD} for symbolic implementation) 
to iterate over the cubes of a given \obdd\ in the same manner the successor function works on a binary tree.
If $l$ is the number of variables not occurring in a particular cube, we get $2^l$ states from that cube which are part of the \obdd. 
We keep on summing up all such states until they exceed $k$. 
If it does exceed, we stop and say that $|X| > k$. Else we terminate when we have exhausted all cubes and we get $|X| \leq k$.
Thus we require $\min (k, |\mathit{BDD}(X)|)$ symbolic steps, where 
$\mathit{BDD}(X)$ is the size of the \obdd\ of $X$. 
We also note, that this method operates on \obdd s that represent 
set of states, and these \obdd s only use $\log(n)$ variables compared to 
$2 \cdot \log(n)$ variables used by \obdd s representing transitions 
(edge relation). 
Hence, the operations mentioned are cheaper as compared to  
$\Pre$ and $\Post$ computations.

\smallskip\noindent{\bf Correctness and runtime analysis.} 
The correctness of \simpa\ is established following the correctness arguments 
for algorithm \impa. 
We now analyze the worst case number of symbolic steps.
The total number of symbolic steps executed by Case 1 over all iterations is 
$O(n \cdot \sqrt{m})$ since between two executions of Case~1 at least 
$\sqrt{m}$ edges are removed, and every execution is achieved in $O(n)$ 
symbolic steps.
The work done for the symbolic cardinality computation is charged to the edges 
already removed from the graph, and hence the total number of symbolic steps 
over all iterations for the size computations is $O(m)$. 
We now show that the total number of symbolic steps executed over all 
iterations of Case 2 is $O(n \cdot \sqrt{m})$.
The analysis is achieved as follows. 
Consider an iteration $i$ of Case~2,  and let the number of states removed 
in the iteration be $n_i$. 
Then the number of symbolic steps executed in this iteration for each of 
the forward search is at most $n_i$, and since $|J_i| \leq \sqrt{m}$, it follows 
that the number of symbolic steps executed is at most $n_i \cdot \sqrt{m}$.
Since we remove $n_i$ states, we \emph{charge} each state removed from the graph 
with $\sqrt{m}$ symbolic steps for the total $n_i \cdot \sqrt{m}$ symbolic steps. 
Since there are at most $n$ states, the total charge of symbolic steps over all 
iterations is $O(n \cdot \sqrt{m})$.
Thus it follows that we have a symbolic algorithm to compute the almost-sure
winning set for MDPs with B\"uchi objectives in 
$O(n \cdot \sqrt{m})$ symbolic steps.

\begin{theorem}\label{thrm_as_reach_symbolic}
Given an MDP $G$ and a set $T$ of target states, the symbolic algorithm 
\simpa\  correctly computes $\waa(\Buchi(T))$ in 
$O(n \cdot \sqrt{m})$ symbolic steps.
\end{theorem}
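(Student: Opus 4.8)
The plan is to split the argument into correctness and a count of symbolic steps, and for correctness to reduce everything to Lemma~\ref{lem:correct} by showing that \simpa\ maintains exactly the two invariants on which that lemma rests. The first step is to observe that the three symbolic primitives realize the set operations used by \impa: iterating $X_{j+1}:=X_j\cup\Pre(X_j)$ from $T_i$ computes the reachability set $Y_i$ of Case~1(a), the analogous iteration with $\CPre$ in place of $\Pre$ computes $\attr_R(Q_i)$, and $\Post$-closure implements the forward search. Granting this, Case~1 of \simpa\ is literally Case~1 of \impa, so the only thing left to check is Case~2.

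For Case~2 the point is that the symbolic forward search $P_0:=\{s\}$, $P_{j+1}:=P_j\cup\Post(P_j)$ stabilizes at a set $P_j$ with $P_j\cap T_i=\emptyset$ precisely when there is no path from $s$ (equivalently, from any state of $P_j$) to $T_i$; this is exactly the condition under which \impa's DFS from $s$ terminates without reaching $T_i$. Hence the set identified as $Q_i$ again has no path to $T_i$, so invariant~(a) of Lemma~\ref{lem:correct} (namely $Z_i\subseteq S\setminus\waa(\Buchi(T))$) is preserved, and the termination test (all forward searches from $J_i$ meet $T_i$) coincides with that of \impa. Consequently invariant~(b) holds verbatim: the last execution of Case~1 leaves no bottom scc avoiding $T_*$, every state of $J_i$ reaches $T_*$, and the shortest-path/positive-probability argument of Lemma~\ref{lem:correct} shows the surviving set $S_*$ equals $\waa(\Buchi(T))$. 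Correctness therefore follows without reproving anything.

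For the count of symbolic steps I would bound the three sources separately. Case~1 is executed at most $\sqrt{m}$ times, because outside the initial iteration it fires only when $|J_i|\ge\sqrt{m}$, and each such state carries an edge into the region deleted since the previous firing, so at least $\sqrt{m}$ edges disappear between consecutive firings; since each execution costs $O(n)$ symbolic steps (a bounded number of $\Pre$/$\CPre$ sweeps), the total is $O(n\cdot\sqrt{m})$. The cardinality tests are charged to edges already deleted, contributing $O(m)$ overall. For Case~2, consider an iteration removing $n_i$ states. Each $\Post$-round of the winning search adds at least one new state until it stabilizes, so that search---and, by the lock-step discipline, every one of the $\le\sqrt{m}$ concurrent searches---halts within $n_i$ symbolic steps; the iteration thus costs at most $\sqrt{m}\cdot n_i$ symbolic steps, which I charge as $\sqrt{m}$ per removed state. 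Summing $n_i$ over all Case~2 iterations gives at most $n$, for a total of $O(n\cdot\sqrt{m})$.

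The main obstacle is the Case~2 accounting, and specifically the difference in what \emph{lock-step} means here versus in \impa: the explicit algorithm equalizes \emph{edges} explored and charges $\sqrt{m}$ to each edge of $E(Q_i)$, whereas the symbolic algorithm equalizes \emph{symbolic steps} ($\Post$-rounds) and must instead charge to \emph{states}. The crucial observation that legitimizes the state-charge is that a symbolic forward search cannot run for more rounds than the number of states it eventually removes, since the $\Post$-closure grows strictly until it halts; combined with $|J_i|\le\sqrt{m}$ this yields the per-iteration bound $\sqrt{m}\cdot n_i$. Establishing this monotonicity, and confirming that a stabilized $\Post$-closure disjoint from $T_i$ genuinely certifies the absence of a path to $T_i$, are the only non-bookkeeping points; everything else is inherited from Lemma~\ref{lem:correct} and the running-time analysis of \impa.
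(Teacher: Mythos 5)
Your proposal is correct and follows essentially the same route as the paper: correctness is inherited from Lemma~\ref{lem:correct} since Case~1 is unchanged and the stabilized $\Post$-closure plays the role of the completed DFS, and the step count is split exactly as in the paper into Case~1 ($\sqrt{m}$ executions at $O(n)$ steps each), cardinality tests charged to removed edges, and Case~2 charged at $\sqrt{m}$ symbolic steps per removed state. Your explicit justification that a forward search halts within $n_i$ rounds (each $\Post$-round adds a new state) is the same observation the paper uses implicitly.
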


\begin{remark}{}
In many practical cases, MDPs have constant out-degree and hence we obtain a
symbolic algorithm that works in $O(n \cdot \sqrt{n})$ symbolic steps, as 
compared to the previous known (symbolic implementation of
the classical) algorithm that requires $\Omega(n^2)$ symbolic steps.
\end{remark}

\begin{remark}{}
Note that in our algorithm we used $\sqrt{m}$ to distinguish between Case~1 and Case~2
to obtain the optimal time complexity. However, our algorithm could also be parametrized
with a parameter $k$ to distinguish between Case~1 and Case~2, and then the number of 
symbolic steps required is $O( \frac{n \cdot m}{k} + n \cdot k)$.
For example, if $m=O(n)$, by choosing $k=\log n$, we obtain a symbolic algorithm 
that requires $O(\frac{n^2}{\log n})$ symbolic steps as compared to the $O(n^2)$ 
symbolic steps of the previous known algorithms.
In other words our algorithm can be easily parametrized to provide a trade-off between 
the number of forward searches and speed up in the number of symbolic steps.
\end{remark}

\subsection{Optimized \simpa}\label{subsec:simpa}
In the worst case, the \simpa\ algorithm takes $O(n \cdot \sqrt{m})$ steps.
However it is easy to construct a family of MDPs with $n$ states and 
$O(n)$ edges, where the classical algorithm takes $O(n)$ symbolic steps, 
whereas \simpa\ requires $\Omega(n \cdot \sqrt{n})$ symbolic steps. 
One approach to obtain an algorithm that takes at most $O(n \cdot \sqrt{n})$ symbolic 
steps and no more than linearly many symbolic steps of the 
classical algorithm is to dovetail (or run in lock-step) the classical algorithm 
and \simpa, and stop when either of them stops. 
This approach will take time at least twice the minimum running time of the 
classical algorithm and \simpa. 
We show that a much smarter dovetailing is possible (at the level of each 
iteration).
We now present the smart dovetailing algorithm, and we call the
algorithm  \oimpa. 
The basic change is in Case 2 of \simpa. We now describe the changes in 
Case 2:
\begin{itemize}

\item At the beginning of an execution of Case 2 at iteration $i$ such that 
the last execution was Case 1, we initialize a set $U_i$ to $T_i$. 
Every time a post computation ($\Post(P_j)$) is done, we update $U_i$ by 
$U_{i+1}:= U_i \cup \Pre(U_i)$ (this is the backward exploration step of the classical algorithm
and it is dovetailed with the forward exploration step in every iteration). 
For the forward exploration step, we continue the computation of $P_j$ unless
$P_{j+1}=P_j$ or $P_{j} \cap U_i \neq \emptyset$ 
(i.e., \simpa\ checked the emptiness of intersection with $T_i$, whereas 
in \oimpa\ the emptiness of the intersection is checked with $U_i$).
If $U_{i+1}=U_i$ (i.e., a fixpoint is reached), then $S_i \setminus U_i$ 
and its random attractor is removed from the graph.
\end{itemize}

\noindent{\bf Correctness and symbolic steps analysis.} 
We present the correctness and number of symbolic steps required analysis 
for the algorithm \oimpa.
The correctness analysis is same as \impa\ and the only change is as follows 
(we describe iteration $i$): 
(a)~if in Case 2 we obtain a set $P_j=P_{j+1}$ and its intersection with $U_i$ is 
empty, then there is no path from $P_j$ to $U_i$ and since $T_i \subseteq U_i$, 
it follows that there is no path from $P_j$ to $U_i$; 
(b)~if $P_j \cap U_i \neq \emptyset$, then since $U_i$ is obtained as the backward
exploration from $T_i$, every state in $U_i$ has a path to $T_i$, and it follows that 
there is a path from the starting state of $P_j$ to $U_i$ and hence to $T_i$;
and 
(c)~if $U_i=\Pre(U_i)$, then $U_i$ is the set of states that can reach $T_i$ and 
all the other states can be removed.  
Thus the correctness follows similar to the arguments for \impa. 
The key idea of the running time analysis is as follows:
\begin{enumerate}
\item Case 1 of the algorithm is same  to Case 1 of \simpa, and in Case 2 the 
algorithm also runs like \simpa, but for every symbolic step ($\Post$ 
computation) of \simpa, there is an additional ($\Pre$) computation. 
Hence the total number of symbolic steps of \oimpa\ is at most twice the 
number of symbolic steps of \simpa.
However, the optimized step of maintaining the set $U_i$ which includes $T_i$ may allow
to stop several of the forward exploration as they may intersect with $U_i$ earlier
than intersection with $T_i$.

\item Case 1 of the algorithm is same as in Case 1 of the classical algorithm. 
In Case 2 of the algorithm the backward exploration step is the same as the classical algorithm, 
and (i)~for every $\Pre$ computation, there is an additional $\Post$ computation and 
(ii)~for every check whether $U_i=\Pre(U_i)$, there is a check whether 
$P_j = P_{j+1}$ or $P_j \cap U_i \neq \emptyset$. 
It follows that the total number of symbolic steps of Case 1 and Case 2 over 
all iterations is at most twice the number of symbolic steps of the classical algorithm.
The cardinality computation takes additional $O(m)$ symbolic steps over all iterations.
\end{enumerate}
Hence we obtain the following result.

\begin{theorem}\label{thrm_as_reach_opt_symbolic}
Given an MDP $G$ and a set $T$ of target states, the symbolic algorithm 
\oimpa\  correctly computes $\waa(\Buchi(T))$ and requires at most 
\[
\min\set{2\cdot\SymbStep(\mbox{\simpa}), 2\cdot\SymbStep(\mbox{\classical}) + O(m)}
\] 
symbolic steps, where $\SymbStep$ is the number of symbolic steps of an
algorithm.
\end{theorem}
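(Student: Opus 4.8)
The plan is to prove correctness and the symbolic-step bound separately and then take the minimum. For correctness I would reduce to Lemma~\ref{lem:correct}, since \oimpa\ differs from \simpa\ only in Case~2. There \oimpa\ can stop an iteration in two ways. First, a forward search from some $s\in J_i$ reaches a fixpoint $P_j$ (so $\Post(P_j)\subseteq P_j$) with $P_j\cap U_i=\emptyset$; because $T_i\subseteq U_i$ this gives $P_j\cap T_i=\emptyset$, and since $P_j$ is closed under $\Post$ no state of $P_j$ can reach $T_i$, so $Q_i=P_j$ has no path to $T_i$. Second, the backward exploration reaches a fixpoint $U_{i+1}=U_i$; then $U_i$ is exactly the set of states of $S_i$ that can reach $T_i$, so $Q_i=S_i\setminus U_i$ again has no path to $T_i$. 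In both cases $\attr_R(Q_i)$ is removed, so the invariant $Z_i\subseteq S\setminus\waa(\Buchi(T))$ of Lemma~\ref{lem:correct} is preserved and the argument identifying the final remaining set with $\waa(\Buchi(T))$ carries over unchanged.

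For the bound $\SymbStep(\mbox{\oimpa})\le 2\cdot\SymbStep(\mbox{\simpa})$, the structural observation is that in Case~2 of \oimpa\ every $\Post$ step is matched one-to-one with a $\Pre$ step, whereas Case~1 and the cardinality computations are identical to \simpa. Hence it suffices to prove the domination claim that \oimpa\ performs, in total, no more forward ($\Post$) steps than \simpa. I would establish this by coupling the two runs: \oimpa's forward searches stop on meeting the growing set $U_i\supseteq T_i$, hence never later than \simpa's, which stop only on meeting $T_i$; and whenever the backward exploration reaches a fixpoint, \oimpa\ removes the whole set $S_i\setminus U_i$ of states unable to reach $T_i$, which is at least as large as any forward-discovered $Q_i$ and so terminates the iteration without additional forward work. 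Granting the domination, $\SymbStep(\mbox{\oimpa})$ exceeds $\SymbStep(\mbox{\simpa})$ only by the matched backward steps in Case~2, which equal \oimpa's Case~2 forward steps and hence are bounded by \simpa's, giving $\SymbStep(\mbox{\oimpa})\le 2\cdot\SymbStep(\mbox{\simpa})$.

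For the bound $\SymbStep(\mbox{\oimpa})\le 2\cdot\SymbStep(\mbox{\classical})+O(m)$, I would couple \oimpa\ against the classical algorithm instead. The set $U_i$, re-initialized to $T_i$ on each entry to Case~2 and expanded by $\Pre$, reproduces the backward-reachability rounds of \classical, and within each Case~2 iteration the dovetailed forward search performs exactly as many steps as this backward computation. Therefore the reachability work of \oimpa\ is at most twice that of \classical, the attractor computations are performed once and are bounded by those of \classical, and the only genuinely new work is the symbolic cardinality testing, which contributes $O(m)$ in total because each test is charged to edges already removed. Summing yields $\SymbStep(\mbox{\oimpa})\le 2\cdot\SymbStep(\mbox{\classical})+O(m)$, and the theorem follows by taking the minimum of the two bounds.

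The main obstacle is the domination claim behind both bounds. Because \oimpa\ may remove the larger set $S_i\setminus U_i$ through a backward fixpoint, its iteration structure does not align with that of \simpa\ (or \classical), so the bounds cannot be obtained by a naive iteration-by-iteration comparison. The delicate part is to phrase the coupling at the granularity of single forward steps—arguing that the set removed by \oimpa\ after $t$ forward steps always contains the set removed by \simpa\ after $t$ forward steps—and to verify a monotonicity property: removing more states earlier, on the smaller remaining graph, never creates extra forward work later. Making this precise is the crux of the argument; the accounting for the matched backward steps and the $O(m)$ cardinality overhead is then routine.
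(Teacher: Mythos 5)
Your proposal takes essentially the same route as the paper's own proof: correctness is argued exactly as in the appendix (the forward fixpoint disjoint from $U_i$, the stop-on-$U_i$ case, and the backward fixpoint $U_i=\Pre(U_i)$ are each reduced to the invariants of Lemma~\ref{lem:correct}), and the step bounds come from pairing each $\Post$ with a $\Pre$ in Case~2 when comparing against \simpa, and each $\Pre$ with a $\Post$ when comparing against \classical, plus $O(m)$ for the cardinality tests. The ``domination claim'' you single out as the unresolved crux is precisely the point the paper settles by assertion rather than proof---it simply states that \oimpa\ runs like \simpa\ (respectively \classical) with one extra dovetailed operation per step, and that stopping forward searches on $U_i\supseteq T_i$ can only reduce work---so your attempt matches the published argument and is, if anything, more candid about where its informality lies.
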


Observe that it is possible that the number of symbolic steps and running time 
of \oimpa\ is smaller than both \simpa\ and \classical\ (in contrast to a 
simple dovetailing of \simpa\ and \classical,  where the running time and 
symbolic steps is twice that of the minimum).
It is straightforward to construct a family of examples where \oimpa\ 
takes linear ($O(n)$) symbolic steps, however both \classical\ and \simpa\ take at least
$O(n \cdot \sqrt{n})$ symbolic steps.


\newcommand{\wl}{{\sc WinLose}}
\newcommand{\iwl}{{\sc ImprWinLose}}
\newcommand{\siwl}{{\sc SymbImprWinLose}}

\section{The Win-Lose Algorithm}\label{sec:winlose}
All the algorithms known for computing the almost-sure winning set (including 
the algorithms presented in the previous section) iteratively compute 
the set of states from where it is guaranteed that there is no almost-sure 
winning strategy for the player. 
The almost-sure winning set is discovered only when the algorithm stops. 
In this section, first we will present an algorithm that iteratively 
computes two sets $W_1$ and $W_2$, where $W_1$ is a subset of the 
almost-sure winning set, and $W_2$ is a subset of the complement of the 
almost-sure winning set. 
The algorithm has $O(K \cdot m)$ running time, where $K$ is the size of the
maximal strongly connected component (scc) of the graph of the MDP. 
We will first present the basic version of the algorithm, and
then present an improved version of the algorithm, using the techniques to
obtain \impa\ from the classical algorithm, and finally present the 
symbolic implementation of the new algorithm.

\subsection{The basic win-lose algorithm} The basic steps of the new 
algorithm 
are as follows.
The algorithm maintains $W_1$ and $W_2$, that are guaranteed to be subsets 
of the almost-sure winning set and its complement respectively. 
Initially $W_1=\emptyset$ and $W_2=\emptyset$. 
We also maintain that $W_1=\attr_{1}(W_1)$ and $W_2=\attr_{R}(W_2)$.
We denote by $W$ the union of $W_1$ and $W_2$.
We describe an iteration of the algorithm and we will refer to the algorithm 
as the \wl\ algorithm (pseudocode is given as Algorithm~\ref{algorithm:WinLose}).
\begin{enumerate}
\item \emph{Step 1.} Compute the scc decomposition of 
the remaining graph of the MDP, i.e., scc decomposition of the MDP graph 
induced by $S\setminus W$. 
\item \emph{Step 2.} For every bottom scc $C$ in the remaining graph: 
if $C \cap \Pre(W_1) \neq \emptyset$ or $C \cap T \neq \emptyset$, then 
$W_1 =\attr_1(W_1 \cup C)$; 
else $W_2 =\attr_{R}(W_2 \cup C)$, and the states in $W_1$ and $W_2$ are 
removed from the graph.
\end{enumerate}
The stopping criterion is as follows: the algorithm stops when $W=S$.
Observe that in each iteration, a set $C$ of states is included in 
either $W_1$ or $W_2$, and hence $W$ grows in each iteration. 
Observe that our algorithm has the flavor of iterative scc decomposition 
algorithm for computing maximal end-component decomposition of MDPs.

\begin{algorithm}[t]
\caption{\bf WinLose}
\label{algorithm:WinLose}
{
\begin{tabbing}
aa \= aa \= aa \= aa \= aa \= aa \= aa \= aa \= aa \= aa \= aa \= aa \kill
\> \textbf{Input}: An MDP $G=((S, E), (\SA,\SR),\trans)$ with B\"uchi set $T$. \\
\> \textbf{Output}: $\waa(\Buchi(T))$, i.e., the almost-sure winning set for player~1. \\
\> 1. $W := W_{1} := W_{2} := \emptyset$ \\
\> 2. \textbf{while}($W \neq S$) \textbf{do} \\
\> \> 2.1. $\mathit{SCCS} := $\textbf{SCC-Decomposition}$(S \setminus W)$ 
(i.e. scc decomposition of the graph induced by $S\setminus W$)\\
\> \> 2.2. \textbf{for each} $C$ in $\mathit{SCCS}$ \\
\> \> \> 2.2.1. \textbf{if} $(E(C) \subset C \cup W)$ \textbf{then} (checks 
if $C$ is a bottom scc in graph induced by $S\setminus W$)\\
\> \> \> \> 2.2.1.1. \textbf{if} $C \cap T \neq \emptyset$ or $E(C) \cap W_{1} \neq \emptyset$ \textbf{then} \\
\> \> \> \> \> 2.2.1.1.1. $W_{1} := W_{1} \cup C$ \\
\> \> \> \> 2.2.1.2. \textbf{else} \\
\> \> \> \> \> 2.2.1.2.1. $W_{2} := W_{2} \cup C$ \\
\> \> 2.3. $W_{1} := \attr_{1}(W_{1}, (S, E),(\SA,\SR))$ \\
\> \> 2.4. $W_{2} := \attr_{R}(W_{2}, (S, E),(\SA,\SR))$ \\
\> \> 2.5. $W := W_{1} \cup W_{2}$ \\
\> 3. \textbf{return} $W_{1}$ 
\end{tabbing}
}
\end{algorithm}

\noindent{\bf Correctness of the algorithm.} Note that in Step~2 we ensure that 
$\attr_1(W_1)=W_1$ and $\attr_R(W_2)=W_2$, and hence in the remaining graph 
there is no state of player~1 with an edge to $W_1$ and no random state with
an edge to $W_2$. 
We show by induction that after every iteration 
$W_1 \subseteq \waa(\Buchi(T))$ and $W_2 \subseteq S\setminus \waa(\Buchi(T))$. 
The base case (with $W_1=W_2=\emptyset$) follows trivially.
We prove the inductive case considering the following two cases.
\begin{enumerate}

\item Consider a bottom scc $C$ in the remaining graph such that 
$C \cap \Pre(W_1) \neq \emptyset$ or $C \cap T\neq \emptyset$. 
Consider the randomized memoryless strategy $\straa$ for the player that 
plays all edges in $C$ uniformly at random, i.e., for $s \in C$ we have 
$\straa(s)(t)=\frac{1}{|E(s) \cap C|}$ for $t \in E(s) \cap C$. 
If $C \cap \Pre(W_1) \neq \emptyset$, then the strategy ensures that $W_1$ 
is reached with probability~1, since $W_1 \subseteq \waa(\Buchi(T))$ by 
inductive hypothesis it follows $C \subseteq \waa(\Buchi(T))$.
Hence $\attr_1(W_1 \cup C) \subseteq \waa(\Buchi(T))$.
If $C \cap T \neq \emptyset$, then since there is no edge from random states 
to $W_2$, it follows that under the randomized memoryless strategy $\straa$,
the set $C$ is a closed recurrent set of the resulting Markov chain, and 
hence every state is visited infinitely often with probability~1.
Since $C \cap T \neq \emptyset$, it follows that $C \subseteq \waa(\Buchi(T))$,
and hence $\attr_1(W_1 \cup C) \subseteq \waa(\Buchi(T))$.

\item Consider a bottom scc $C$ in the remaining graph such that 
$C \cap \Pre(W_1)=\emptyset$ and $C \cap T=\emptyset$. Then consider any 
strategy for player~1: 
(a)~If a play starting from a state in $C$ stays in the remaining graph, 
then since $C$ is a bottom scc, it follows that the play stays in $C$ 
with probability~1. Since $C \cap T =\emptyset$ it follows that $T$ is 
never visited. 
(b)~If a play leaves $C$ (note that $C$ is a bottom scc of the remaining 
graph and not the original graph, and hence a play may leave $C$), 
then since $C \cap \Pre(W_1)=\emptyset$, it follows 
that the play reaches $W_2$, and by hypothesis $W_2 \subseteq S\setminus 
\waa(\Buchi(T))$. 
In either case it follows that $C \subseteq S \setminus \waa(\Buchi(T))$.
It follows that $\attr_{R}(W_2 \cup C) \subseteq S\setminus \waa(\Buchi(T))$.
\end{enumerate}
The correctness of the algorithm follows as when the algorithm stops 
we have $W_1 \cup W_2 =S$.

\noindent{\bf Running time analysis.} In each iteration of 
the algorithm at least one state is removed from the graph, and every 
iteration takes at most $O(m)$ time: in every iteration, the scc 
decomposition of step~1 and the attractor computation in step~2 can be 
achieved in $O(m)$ time.
Hence the naive running of the algorithm is $O(n \cdot m)$. 
The desired $O(K\cdot m)$ bound is achieved by considering the standard
technique of running the algorithm on the scc decomposition of the MDP.
In other words, 
we first compute the scc of the graph of the MDP, and then proceed bottom 
up computing the partition $W_1$ and $W_2$ for an scc $C$ once the partition is 
computed for all states below the scc. 
Observe that the above correctness arguments are still valid.
The running time analysis is as follows:
let $\ell$ be the number of scc's of the graph, and let 
$n_i$ and $m_i$ be the number of states and edges of the $i$-th scc. 
Let $K=\max \set{n_i \mid 1\leq i \leq \ell}$.
Our algorithm runs in time $O(m) + \sum_{i=1}^\ell O(n_i \cdot m_i) 
\leq O(m) + \sum_{i=1}^\ell O(K \cdot m_i) = O(K \cdot m)$.

\begin{theorem}
Given an MDP with a B\"uchi objective, the \wl\ algorithm iteratively
computes the subsets of the almost-sure winning set and its complement, 
and in the end correctly computes the set $\waa(\Buchi(T))$ and 
the algorithm runs in time $O(K_S \cdot m)$, where $K_S$ is the maximum 
number of states in an scc of the graph of the MDP.
\end{theorem}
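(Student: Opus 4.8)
The plan is to prove the two correctness claims and the running-time bound separately, reusing the structural facts already recorded for the \wl\ algorithm. The heart of the correctness proof is the invariant that after every iteration $W_1 \subseteq \waa(\Buchi(T))$ and $W_2 \subseteq S \setminus \waa(\Buchi(T))$, which I would establish by induction on the number of iterations. The base case is immediate since $W_1 = W_2 = \emptyset$. For the inductive step I would fix a bottom scc $C$ of the graph induced by $S \setminus W$ and split on the test used by Step~2, using crucially the maintained fixpoint properties $\attr_1(W_1)=W_1$ and $\attr_R(W_2)=W_2$. These give that no player-$1$ state outside $W_1$ has an edge into $W_1$ and no random state outside $W_2$ has an edge into $W_2$; combined with $C$ being a bottom scc of the remaining graph, every edge leaving $C$ lands in $W = W_1 \cup W_2$.

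For the winning case ($C \cap \Pre(W_1) \neq \emptyset$ or $C \cap T \neq \emptyset$) I would exhibit the uniform randomized memoryless strategy on the edges inside $C$. If $C$ meets $\Pre(W_1)$, this strategy reaches $W_1$ with probability~$1$ and the inductive hypothesis $W_1 \subseteq \waa(\Buchi(T))$ closes the case; if $C$ meets $T$, then because no random state of $C$ can escape to $W_2$, the set $C$ is a closed recurrent set of the induced Markov chain, so by the closed-recurrent-set property every state of $C$ --- in particular a target state --- is visited infinitely often almost surely. Taking the player-$1$ attractor preserves almost-sure winning, giving $\attr_1(W_1 \cup C) \subseteq \waa(\Buchi(T))$.

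The losing case ($C \cap \Pre(W_1) = \emptyset$ and $C \cap T = \emptyset$) is the main obstacle, and I would treat it with care. Since no state of $C$ has an edge to $W_1$, every edge that leaves $C$ enters $W_2$; since $C \cap T = \emptyset$, any play confined to $C$ fails the objective. Fix an arbitrary player-$1$ strategy from $s \in C$ and condition on whether the play ever leaves $C$: on the event that it stays in $C$ forever the objective fails, and on the complementary event the play reaches $W_2 \subseteq S \setminus \waa(\Buchi(T))$. The delicate point is to rule out a winning probability of exactly~$1$ when escape happens almost surely; here I would invoke that on a finite MDP the almost-sure winning set coincides with the value-$1$ set and is achieved (a consequence of Theorem~\ref{thrm_memless}), so the winning value on the finite set $W_2$ is bounded away from~$1$, and a standard conditioning on the first entry state into $W_2$ forces $\Prb_s^{\straa}(\Buchi(T)) < 1$ in every subcase. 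Hence $C \subseteq S \setminus \waa(\Buchi(T))$ and, taking the random attractor, $\attr_R(W_2 \cup C) \subseteq S \setminus \waa(\Buchi(T))$.

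For termination and the final equality, I would note that the scc decomposition of any nonempty graph has at least one bottom scc, so $W$ strictly grows each iteration and the algorithm halts with $W = S$; the invariant then yields $W_1 = \waa(\Buchi(T))$ and $W_2 = S \setminus \waa(\Buchi(T))$. For the running time, the naive bound is $O(n \cdot m)$ since each iteration removes at least one state at cost $O(m)$ (scc decomposition plus attractor computation). To sharpen it to $O(K_S \cdot m)$ I would run the whole procedure scc-by-scc on the scc decomposition of the MDP, processing bottom-up so that the classification of all states below an scc is already fixed when it is handled; the invariant proof above is unaffected because escaping edges still land in already-classified states. Summing the per-scc cost $O(n_i \cdot m_i) \leq O(K_S \cdot m_i)$ over all sccs gives $O(m) + \sum_i O(K_S \cdot m_i) = O(K_S \cdot m)$, as claimed.
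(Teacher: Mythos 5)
Your proof follows the paper's argument essentially step for step: the same inductive invariant $W_1 \subseteq \waa(\Buchi(T))$, $W_2 \subseteq S \setminus \waa(\Buchi(T))$, the same uniform randomized memoryless strategy and closed-recurrent-set reasoning in the winning case, the same stay-in-$C$ versus escape-to-$W_2$ dichotomy in the losing case, and the same bottom-up scc-decomposition trick giving $O(m) + \sum_i O(K_S \cdot m_i) = O(K_S \cdot m)$. The only divergence is that you treat the losing case more carefully than the paper (which leaves the ``escapes almost surely'' subcase implicit), though your appeal to the equality of the value-$1$ set and the almost-sure set is heavier than needed and does not actually follow from Theorem~\ref{thrm_memless} alone (it needs the quantitative memoryless-optimality results of the cited works); a lighter argument suffices: condition on any single positive-probability finite history entering $W_2$ at some state $t$, and since no strategy from $t \notin \waa(\Buchi(T))$ wins almost surely, the shifted strategy fails with positive conditional probability, so the overall winning probability is below~$1$.
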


\subsection{Improved \wl\ algorithm and symbolic implementation}

\noindent{\bf Improved \wl\ algorithm.}
The improved version of the \wl\ algorithm performs a forward exploration 
to obtain a bottom scc like Case 2 of \impa.
At iteration $i$, we denote the remaining subgraph as $(S_i,E_i)$, where 
$S_i$ is the set of remaining states, and $E_i$ is the set of remaining 
edges.
The set of states removed will be denoted by $Z_i$, i.e., $S_i=S \setminus Z_i$,
and $Z_i$ is the union of $W_1$ and $W_2$.
In every iteration the algorithm identifies a set $C_i$ of states such that
$C_i$ is a bottom scc in the remaining graph, and then it follows the 
steps of the \wl\ algorithm.
We will consider two cases.
The algorithm maintains 
the set $L_{i+1}$ of states that were removed from the graph since (and including) 
the last iteration of Case~1, and the set $J_{i+1}$ of states that lost an edge 
to states removed from the graph since the last iteration of Case~1. 
Initially $J_0:=L_0:=Z_0:=W_1:=W_2:=\emptyset$, and let $i:=0$, and we describe the 
iteration $i$ of our algorithm. 
We call our algorithm \iwl\ 
(pseudocode is given as Algorithm~\ref{algorithm:ImprWinLose}).

\begin{enumerate}
\item \emph{Case~1.} If ($(|J_i| > \sqrt{m})$ or $i=0$), then 
\begin{enumerate}
\item Compute the scc decomposition of the remaining graph. 

\item For each bottom scc $C_i$, if $C_i \cap T \neq \emptyset$ or $C_i \cap \Pre(W_1) \neq \emptyset$, 
then $W_1:= \attr_1(W_1 \cup C_i)$, else $W_2:= \attr_{R}(W_2 \cup C_i)$.

\item $Z_{i+1}:= W_1 \cup W_2$. The set $Z_{i+1} \setminus Z_i$ 
is removed from the graph.

\item The set $L_{i+1}$ is the set of states removed from the graph in
this iteration and $J_{i+1}$ 
be the set of states in the remaining graph with an edge to $L_{i+1}$. 

\item If $Z_i$ is $S$, the algorithm stops, 
otherwise $i:=i+1$ and go to the next iteration.

\end{enumerate}

\item \emph{Case 2.} Else  ($|J_i| \leq \sqrt{m})$ and $i>0$), then 
\begin{enumerate}
\item Consider the set $J_{i}$ to be the set of vertices in the graph that 
lost an edge to the states removed since the last iteration that 
executed Case~1.   

\item We do a lock-step search from every state $s$ in $J_i$ as follows: 
we do a DFS from $s$, until the DFS stops. 
Once the DFS stops we have identified a bottom scc $C_i$. 

\item If $C_i \cap T \neq \emptyset$ or $C_i \cap \Pre(W_1) \neq \emptyset$, 
then $W_1:= \attr_1(W_1 \cup C_i)$, else $W_2:= \attr_{R}(W_2 \cup C_i)$.

\item $Z_{i+1}:= W_1 \cup W_2$. The set $Z_{i+1} \setminus Z_i$ 
is removed from the graph.

\item The set $L_{i+1}$ is the set of states removed from the graph since 
the last iteration of Case~1 and $J_{i+1}$ be the set of states in the 
remaining graph with an edge to $L_{i+1}$.

\item If $Z_i=S$, the algorithm stops, otherwise $i:=i+1$ and go to the next iteration.

\end{enumerate}

\end{enumerate}

\begin{algorithm}[!ht]
\caption{\bf ImprWinLose}
\label{algorithm:ImprWinLose}
{
\begin{tabbing}
aa \= aa \= aa \= aa \= aa \= aa \= aa \= aa \= aa \= aa \= aa \= aa \kill
\> \textbf{Input}: An MDP $G=((S, E), (\SA,\SR),\trans)$ with B\"uchi set $T$. \\
\> \textbf{Output}: $\waa(\Buchi(T))$, i.e., the almost-sure winning set for player~1. \\
\> 1. $i := 0$; $S_{0} := S$; $E_{0} := E$; $T_{0} := T$; \\
\> 2. $W_{1} := W_{2} := L_{0} := Z_{0} := J_{0} := \emptyset $; \\
\> 3. \textbf{if} $(|J_{i}| > \sqrt{m}$ or $i = 0)$ \textbf{then}\\
\> \> 3.1. $\mathit{SCCS} := $\textbf{SCC-Decomposition}$(S_{i})$ (scc decomposition of graph induced by $S_i$)\\
\> \> 3.2. \textbf{for each} $C$ in $\mathit{SCCS}$ \\
\> \> \> 3.2.1. \textbf{if} $(E_{i}(C) \subset C)$ \textbf{then} (checks if $C$ is a bottom scc in graph induced by $S_i$)\\
\> \> \> \> 3.2.1.1. \textbf{if} $C \cap T \neq \emptyset$ or $E(C) \cap W_{1} \neq \emptyset$ \textbf{then} \\
\> \> \> \> \> 3.2.1.1.1. $W_{1} := W_{1} \cup C$ \\
\> \> \> \> 3.2.1.2. \textbf{else} \\
\> \> \> \> \> 3.2.1.2.1. $W_{2} := W_{2} \cup C$ \\
\> \> 3.3. \textbf{goto} line $5$ \\
\> 4. \textbf{else} (i.e., $J_{i} \leq \sqrt{m}$ and $i>0$) \\
\> \> 4.1. \textbf{for each} $s \in J_{i}$ \\
\> \> \> 4.1.1. $\mathit{DFS}_{i,s} := s$ (initializing DFS-trees)\\
\> \> 4.2. \textbf{for each} $s \in J_{i}$ \\
\> \> \> 4.2.1. Do 1 step of DFS from $\mathit{DFS}_{i,s}$\\
\> \> \> 4.2.2. \textbf{if} DFS completes \textbf{then} \\
\> \> \> \> 4.2.2.1. $C := \mathit{DFS}_{i,s}$\\
\> \> \> \> 4.2.2.2. \textbf{if} $C \cap T \neq \emptyset$ or $E(C) \cap W_{1} \neq \emptyset$ \textbf{then} \\
\> \> \> \> \> 4.2.2.2.1. $W_{1} := W_{1} \cup C$ \\
\> \> \> \> 4.2.2.3. \textbf{else} \\
\> \> \> \> \> 4.2.2.3.1. $W_{2} := W_{2} \cup C$ \\
\> \> \> \> 4.2.2.4. \textbf{goto} line $5$ \\
\> 5. Removal of $W_{1}$ and $W_{2}$ states in the following steps \\
\> \> 5.1. $W_{1} := \attr_{1}(W_{1}, (S_{i}, E_{i}),(\SA\cap S_i,\SR\cap S_i))$ \\
\> \> 5.2. $W_{2} := \attr_{R}(W_{2}, (S_{i}, E_{i}),(\SA\cap S_i,\SR\cap S_i))$ \\
\> \> 5.3. $Z_{i+1} := Z_{i} \cup W_{1} \cup W_{2}$ \\
\> \> 5.4. $S_{i+1} := S_{i} \setminus Z_{i+1}$; $E_{i+1} := E_{i} \cap S_{i+1} \times S_{i+1}$ \\
\> \> 5.5. \textbf{if} the last \textbf{goto} call was from line $3.3$ \textbf{then} \\
\> \> \> 5.5.1. $L_{i+1} := Z_{i+1} \setminus Z_{i}$ \\
\> \> 5.6. \textbf{else} \\
\> \> \> 5.6.1. $L_{i+1} := L_{i} \cup (Z_{i+1} \setminus Z_{i})$ \\
\> \> 5.7. $J_{i+1} := E^{-1}(L_{i+1}) \cap S_{i+1}$ \\
\> \> 5.8. \textbf{if} $Z_{i+1} = S$ \textbf{then} \\
\> \> \> 5.8.1. goto line $6$ \\
\> \> 5.9. $i := i + 1$; \textbf{goto} line $3$ \\
\> 6. \textbf{return} $W_{1}$ \\
\end{tabbing}
}
\end{algorithm}

\medskip\noindent{\bf Correctness and running time.} 
The correctness proof of \iwl\ is similar as the correctness argument 
of \wl\ algorithm.
One additional care requires to be taken for Case~2: we need to show 
that when we terminate the lockstep DFS search in Case~2, then 
we obtain a bottom scc.
First, we observe that in iteration $i$, when Case~2 is executed, 
each bottom scc must contain a state from $J_i$, since it was not 
a bottom scc in the last execution of Case 1.
Second, among all the lockstep DFSs, the first one that terminates 
must be a bottom scc because the DFS search from a state of $J_i$ 
that does not belong to a bottom scc explores states of bottom scc's 
below it.
Since Case~2 stops when the first DFS terminates we obtain a bottom scc.
The rest of the correctness proof is identical as the proof for 
the \wl\ algorithm.
The running time analysis of the algorithm is similar to \impa\ algorithm, and this
shows the algorithm runs in $O(m \cdot \sqrt{m})$ time. 
Applying the \iwl\ algorithm bottom up on the scc decomposition of the MDP gives us 
a running time of $O(m \cdot \sqrt{K_E})$, where $K_E$ is the maximum number of 
edges of an scc of the MDP.

\begin{theorem}
Given an MDP with a B\"uchi objective, the \iwl\ algorithm iteratively
computes the subsets of the almost-sure winning set and its complement, 
and in the end correctly computes the set $\waa(\Buchi(T))$.
The algorithm \iwl\ runs in time $O(\sqrt{K_E} \cdot m)$, where $K_E$ is the maximum 
number of edges in an scc of the graph of the MDP.
\end{theorem}

\smallskip\noindent{\bf Symbolic implementation.} The symbolic implementation of 
\iwl\ algorithm is obtained in a similar fashion as \simpa\ was obtained from 
\impa. The only additional step required is the symbolic scc computation. It follows
from the results of~\cite{GPP} that scc decomposition can be computed in 
$O(n)$ symbolic steps. 
In the following section we will present an improved symbolic scc computation 
algorithm. 
The correctness proof of \siwl\ is similar to \iwl\ algorithm.
For the correctness of the \siwl\ algorithm we again need to 
take care that when we terminate in Case~2, then we have identified a
bottom scc. 
Note that for symbolic step forward search we cannot guarantee that 
the forward search that stops first gives a bottom scc.
For Case~2 of the \siwl\ we do in lockstep both symbolic forward and 
backward searches, stop when both the searches stop and gives
the same result. 
Thus we ensure when we terminate an iteration of Case~2 we obtain a
bottom scc. 
The correctness then follows from the correctness arguments of \wl\ 
and \iwl.
The symbolic steps required analysis is same as for \simpa.

\begin{corollary}
Given an MDP with a B\"uchi objective, the symbolic \iwl\ algorithm (\siwl) 
iteratively computes the subsets of the almost-sure winning set and its 
complement, and in the end correctly computes the set $\waa(\Buchi(T))$. 
The algorithm \siwl\  requires $O(\sqrt{K_E} \cdot n)$ symbolic steps, where $K_E$ is the maximum 
number of edges in an scc of the graph of the MDP.
\end{corollary}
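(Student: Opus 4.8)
The plan is to derive the corollary by combining the correctness and running-time guarantees of the explicit \iwl\ algorithm (the preceding theorem) with the symbolic implementation machinery already developed for \simpa, the only genuinely new ingredient being the symbolic scc decomposition. First I would establish correctness. Every primitive used by \iwl\ --- the scc decomposition of the remaining graph, the membership tests $C_i \cap T \neq \emptyset$ and $C_i \cap \Pre(W_1) \neq \emptyset$, the attractor computations $\attr_1(W_1 \cup C_i)$ and $\attr_{R}(W_2 \cup C_i)$, and the cardinality comparison $|J_i| \geq \sqrt{m}$ --- can be realized by a symbolic procedure that returns exactly the same set (respectively, the same Boolean answer). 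Hence the run of \siwl\ produces, iteration by iteration, the identical sequence of sets $W_1, W_2, C_i, Z_i$ as the explicit \iwl, and therefore inherits verbatim the invariant $W_1 \subseteq \waa(\Buchi(T))$ and $W_2 \subseteq S \setminus \waa(\Buchi(T))$, together with the termination condition $W_1 \cup W_2 = S$. Thus the set computed on termination is exactly $\waa(\Buchi(T))$.

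For the symbolic step count I would transcribe the \simpa\ analysis. In Case~1 the scc decomposition is computed symbolically in $O(n)$ steps using the algorithm of~\cite{GPP} (or the improved algorithm of the next section), and each attractor is computed in $O(n)$ symbolic steps by iterating $\CPre$. Since $|J_i| \geq \sqrt{m}$ forces at least $\sqrt{m}$ edges to be removed between two consecutive executions of Case~1, Case~1 runs at most $\sqrt{m}$ times, for a total of $O(n \cdot \sqrt{m})$ symbolic steps. In Case~2 the lock-step forward DFS is replaced, exactly as in \simpa, by a lock-step family of forward searches $P_{j+1} := P_j \cup \Post(P_j)$ that each execute the same number of symbolic steps; charging $\sqrt{m}$ symbolic steps to each of the $n_i$ states removed in an iteration and summing over all iterations yields $O(n \cdot \sqrt{m})$ symbolic steps. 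The cardinality comparisons are charged to edges already removed from the graph, contributing $O(m)$ in total, and operate only on the $\log n$-variable \obdd s as discussed for \simpa. Thus on the full graph \siwl\ takes $O(n \cdot \sqrt{m})$ symbolic steps.

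To sharpen the bound from $\sqrt{m}$ to $\sqrt{K_E}$ I would apply \siwl\ bottom-up on the scc decomposition of the MDP, exactly as in the explicit \iwl\ theorem. After one global symbolic scc decomposition ($O(n)$ steps), each scc $i$ with $n_i$ states and $m_i \leq K_E$ edges is processed with the threshold $\sqrt{m_i}$ governing its Case~1/Case~2 split, so the per-scc cost is $O(n_i \cdot \sqrt{m_i})$. Summing, $\sum_i O(n_i \cdot \sqrt{m_i}) \leq O(\sqrt{K_E}) \cdot \sum_i n_i = O(\sqrt{K_E} \cdot n)$, which is the claimed bound.

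I expect the main obstacle to be the symbolic scc decomposition: unlike $\Pre$, $\Post$, and the attractor steps, it is not a single backward or forward fixpoint, and I must invoke the linear-symbolic-step decomposition of~\cite{GPP} and verify that its $O(n)$ cost is incurred only once per Case~1 iteration (and once globally for the bottom-up pass), so that it never dominates the $O(n \cdot \sqrt{K_E})$ budget. A secondary point requiring care is that the forward search in Case~2 now isolates a bottom scc $C_i$ rather than a target-free DFS tree as in \impa; here I would argue that the fixpoint of $P_{j+1} := P_j \cup \Post(P_j)$ is forward-closed in the remaining graph and hence contains a bottom scc, and that, since the search is confined to the scc under consideration and the extraction of $C_i$ is charged to the removed states, the identification stays within the claimed number of symbolic steps.
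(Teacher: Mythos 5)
Your step-count analysis and the bottom-up reduction to $O(\sqrt{K_E}\cdot n)$ match the paper, but your correctness argument has a genuine gap, and it is exactly the point the paper's proof is about. You claim that every primitive of \iwl\ can be realized symbolically so as to return \emph{the same} result, whence \siwl\ reproduces the identical run of \iwl. That is true for the scc decomposition of Case~1, the membership tests, the attractors and the cardinality test, but it fails for the lock-step search of Case~2. In the explicit \iwl\ the DFSs are synchronized by the number of \emph{edges} explored, and the first DFS to complete is provably a bottom scc (a DFS started at a state of $J_i$ that does not lie in a bottom scc must explore some bottom scc below it, which itself contains a state of $J_i$ whose DFS finishes strictly earlier). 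In the symbolic version the searches are synchronized by \emph{symbolic steps}, and a forward search $P_{j+1}:=P_j\cup\Post(P_j)$ stabilizes after a number of steps equal to the \emph{depth} of its forward closure, not its size. Hence the first search to stabilize may start at a state that lies in no bottom scc: for instance, a state of $J_i$ with a single edge into a large bottom scc of small diameter stabilizes long before a state of $J_i$ lying on a bottom scc that is a long cycle. Its fixpoint is then a forward-closed set that strictly contains a bottom scc but is not strongly connected. Your fallback remark---that the fixpoint is forward-closed and therefore \emph{contains} a bottom scc---does not repair this: the \wl\ invariants ($W_1\subseteq\waa(\Buchi(T))$ and $W_2\subseteq S\setminus\waa(\Buchi(T))$) are preserved only when the set $C_i$ fed to the classification step is \emph{exactly} a bottom scc, and classifying a proper forward-closed superset (e.g.\ one that meets $T$ only outside its bottom scc) into $W_1$ or $W_2$ is unsound; moreover you give no procedure for extracting the bottom scc from the fixpoint within the step budget.

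The paper closes this gap with one additional idea, which is the only genuinely new ingredient of the corollary: in Case~2 of \siwl, each state of $J_i$ runs a forward search \emph{and} a backward search in lockstep, and an iteration terminates only when for some state both searches have stabilized and yield the same set. Agreement certifies that the identified set is simultaneously strongly connected and forward-closed, i.e.\ a genuine bottom scc, after which the correctness argument of \wl\ and \iwl\ applies (not by claiming an identical run, but by re-establishing the invariants for whatever bottom scc is found first). The remainder of your proposal---Case~1 via the $O(n)$ symbolic scc decomposition of~\cite{GPP}, attractors by iterating $\CPre$, the cardinality test charged to removed edges, the \simpa-style charging of $\sqrt{m}$ symbolic steps per removed state, and the per-scc application with threshold $\sqrt{m_i}$ giving $\sum_i O(n_i\cdot\sqrt{m_i})\le O(\sqrt{K_E}\cdot n)$---is consistent with the paper's argument.
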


\begin{remark}
It is clear from the complexity of the \wl\ and \iwl\ algorithms that they 
would perform better for MDPs where the graph has many small scc's, rather 
than few large ones.
\end{remark}

\newcommand{\FW}{\mathsf{FW}}
\newcommand{\skelfwd}{{\sc SkelFwd}}
\newcommand{\iskelfwd}{{\sc ImprovedSkelFwd}}
\newcommand{\sym}{{\sc SymbolicScc}}
\newcommand{\isym}{{\sc ImprovedSymbolicScc}}
\newcommand{\sccfind}{{\sc SCCFind}}
\newcommand{\isccfind}{{\sc ImprovedSCCFind}}
\newcommand{\newstate}{\mathsf{NewState}}
\newcommand{\newset}{\mathsf{NewSet}}
\newcommand{\scc}{\mathsf{SCC}}
\newcommand{\fw}{\mathsf{FWSet}}

\section{Improved Symbolic SCC Algorithm}\label{sec:symscc}
A symbolic algorithm to compute the scc decomposition of a graph in 
$O(n \cdot \log n)$ symbolic steps was presented in~\cite{BGS00}.
The algorithm of~\cite{BGS00} was based on forward and backward searches.
The algorithm of~\cite{GPP} improved the algorithm of~\cite{BGS00} to
obtain an algorithm for scc decomposition that takes at most linear 
amount of symbolic steps. 
In this section we present an improved version of the algorithm of~\cite{GPP} 
that improves the constants of the number of linear symbolic steps required.
In Section~\ref{subsec:main-idea} we present the improved
algorithm and correctness, and some further technical details are
presented in Section~\ref{subsec:technical} of appendix.

\subsection{Improved algorithm and correctness}\label{subsec:main-idea}
We first describe the main ideas of the algorithm of~\cite{GPP} and 
then present our improved algorithm.
The algorithm of~\cite{GPP} improves the algorithm of~\cite{BGS00}  by 
maintaining the right order for forward sets.
The notion of \emph{spine-sets} and \emph{skeleton of a forward set} was
designed for this purpose.

\smallskip\noindent{\bf Spine-sets and skeleton of a forward set.} 
Let $G=(S,E)$ be a directed graph. 
Consider a finite path $\tau=(s_0, s_1, \ldots, s_\ell)$, such that for all 
$0 \leq i \leq \ell-1$ we have $(s_i,s_{i+1}) \in E$. 
The path is \emph{chordless} if for all $0 \leq i < j \leq \ell$ such that 
$j-i > 1$, there is no edge from $s_i$ to $s_j$.
Let $U \subseteq S$. The pair $(U,s)$ is a \emph{spine-set} of $G$ iff
$G$ contains a chordless path whose set of states is $U$ that 
ends in $s$. 
For a state $s$, let $\FW(s)$ denote the set of states that is reachable 
from $s$ (i.e., reachable by a forward search from $s$). 
The set $(U,t)$ is a \emph{skeleton of} $\FW(s)$ iff $t$ is a state in 
$\FW(s)$ whose distance from $s$ is maximum and $U$ is the set of states 
on a shortest path from $s$ to $t$. 
The following lemma was shown in~\cite{GPP} establishing relation of skeleton 
of forward set and spine-set.

\begin{lemma}[\cite{GPP}]
Let $G=(S,E)$ be a directed graph, and let $\FW(s)$ be the forward set of 
$s \in S$. 
The following assertions hold: (1) If $(U,t)$ is a skeleton of the forward-set 
$\FW(s)$, then $U \subseteq \FW(s)$.
(2) If $(U,t)$ is a skeleton of $\FW(s)$, then $(U,t)$ is a spine-set in $G$.
\end{lemma}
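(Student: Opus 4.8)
The plan is to prove the two assertions separately, both by elementary arguments about shortest paths, with only the second requiring a small observation. The first step in either case is to unfold the definition of a skeleton: saying that $(U,t)$ is a skeleton of $\FW(s)$ means that $t \in \FW(s)$ and that $U$ is the vertex set of some shortest path $s = u_0, u_1, \ldots, u_k = t$ from $s$ to $t$. I would fix this shortest path $\tau = (u_0, \ldots, u_k)$ once and reuse it throughout, noting that a shortest path is necessarily \emph{simple} (a repeated vertex would allow deletion of a cycle and yield a strictly shorter path), so that $U = \set{u_0, \ldots, u_k}$ is genuinely the state set of $\tau$.

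For assertion~(1), I would argue directly that $U \subseteq \FW(s)$. Each prefix $u_0, u_1, \ldots, u_i$ of $\tau$ is itself a walk from $s$ to $u_i$, so every $u_i$ is reachable from $s$ and hence lies in $\FW(s)$. Taking the union over $0 \leq i \leq k$ gives $U \subseteq \FW(s)$. This needs no case analysis and is immediate from the definition of the forward set.

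For assertion~(2), I would exhibit the very same shortest path $\tau$ as the witnessing chordless path. Its state set is $U$ by construction and it ends in $t = u_k$, so the only thing left to check is that $\tau$ is chordless. I would do this by contradiction: suppose there were a chord, i.e.\ an edge $(u_i, u_j) \in E$ with $j - i > 1$. Then the walk $u_0, \ldots, u_i, u_j, \ldots, u_k$ is a path from $s$ to $t$ of length $i + 1 + (k - j) = k + 1 - (j-i) \leq k - 1 < k$, contradicting that $\tau$ is a shortest $s$-to-$t$ path. Hence no chord exists, $\tau$ is chordless, and therefore $(U,t)$ is a spine-set of $G$.

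I do not expect any genuine obstacle; the single point deserving care is the chordless argument in~(2), where one must track the length of the shortcut carefully to confirm it is strictly smaller (which follows since $j - i \geq 2$). I would also remark that the maximality of the distance of $t$ from $s$ plays no role in this lemma---it is used elsewhere to maintain the correct ordering of forward sets---so I would deliberately avoid invoking it, keeping the proof confined to the shortest-path property of $U$.
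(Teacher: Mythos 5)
Your proof is correct. Note that the paper itself gives no proof of this lemma---it is quoted verbatim from~\cite{GPP} as a known result---so there is no in-paper argument to compare against; your two-part argument (prefixes of the shortest path witness membership in $\FW(s)$; a chord $(u_i,u_j)$ with $j-i>1$ would yield an $s$-to-$t$ walk of length $k+1-(j-i)\leq k-1$, contradicting minimality) is exactly the standard proof one would reconstruct, and it is complete. Your closing observation is also accurate: the maximality of $t$'s distance in the definition of a skeleton is irrelevant here and only matters for the amortized accounting elsewhere in the scc algorithm.
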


\noindent{\bf The intuitive idea of the algorithm.} 
The algorithm of~\cite{GPP} is a recursive
algorithm,  and in every recursive call the scc of a state $s$ is determined by 
computing $\FW(s)$, and then identifying the set of states in $\FW(s)$ having a 
path to $s$.
The choice of the state to be processed next is guided by the implicit inverse 
order associated with a possible spine-set. 
This is achieved as follows: whenever a forward-set $\FW(s)$ is computed, a 
skeleton of such a forward set is also computed. 
The order induced by the skeleton is then used for the subsequent computations.
Thus the symbolic steps performed to compute $\FW(s)$ are distributed over
the scc computation of the states belonging to a skeleton of $\FW(s)$. 
The key to establish the linear complexity of symbolic steps is the amortized
analysis. 
We now present the main procedure \sccfind\ and the main sub-procedure 
\skelfwd\  of the algorithm from~\cite{GPP}.

\smallskip\noindent{\bf Procedures \sccfind\ and \skelfwd.} 
The main procedure of the algorithm is \sccfind\ that calls \skelfwd\ 
as a sub-procedure. 
The input to \sccfind\ is a graph $(S,E)$ and $(A,B)$, where either 
$(A,B)=(\emptyset,\emptyset)$ or $(A,B)=(U,\set{s})$, where $(U,s)$ is a 
spine-set.
If $S$ is $\emptyset$, then the algorithm stops. 
Else, (a) if $(A,B)$ is $(\emptyset,\emptyset)$, then the procedure 
picks an arbitrary $s$ from $S$ and proceeds;
(b) otherwise, the sub-procedure \skelfwd\ is invoked to compute the forward 
set of $s$ together with the skeleton  $(U',s')$ of such a forward set. 
The \sccfind\ procedure has the following local variables: 
$\fw,\newset,\newstate$ and $\scc$.
The variable $\fw$ that maintains the forward set, whereas
$\newset$ and $\newstate$ maintain $U'$ and $\set{s'}$, respectively. 
The variable $\scc$ is initialized to $s$, and then augmented with the 
scc containing $s$. The partition of the scc's is updated and finally
the procedure is recursively called over:
\begin{enumerate}
\item the subgraph of $(S,E)$ is induced by $S \setminus \fw$ and the 
spine-set of such a subgraph obtained from $(U,\set{t})$ 
by subtracting $\scc$; 
\item the subgraph of $(S,E)$ induced by $\fw\setminus \scc$ and the spine-set
of such a subgraph obtained from $(\newset,\newstate)$ by subtracting 
$\scc$.
\end{enumerate}
The \skelfwd\ procedure takes as input a graph $(S,E)$ and a state $s$, 
first it computes the forward set $\FW(s)$,
and second it computes the skeleton of the forward set.
The forward set is computed by symbolic breadth first search, and 
the skeleton is computed with a stack. 
The detailed pseudocodes are in the following subsection.
We will refer to this algorithm of~\cite{GPP} as \sym.
The following result was established in~\cite{GPP}: for the 
proof of the constant 5, refer to the appendix of~\cite{GPP} and 
the last sentence explicitly claims that every state is charged 
at most 5 symbolic steps. 

\begin{theorem}[\cite{GPP}]\label{thrm-cgp}
Let $G=(S,E)$ be a directed graph. 
The algorithm \sym\ correctly computes the scc decomposition 
of $G$ in $\min\set{5\cdot |S|, 5 \cdot D(G) \cdot N(G) + N(G)}$ symbolic steps,
where $D(G)$ is the diameter of $G$, and $N(G)$ is the number of 
scc's in $G$.
\end{theorem}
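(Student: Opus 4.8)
The plan is to establish the bound by an \emph{amortized} charging analysis of the recursive procedure \sccfind, reconstructing the argument of \cite{GPP} and using the skeleton/spine-set structure to redistribute the cost of the forward searches. First I would classify every symbolic step performed over the whole execution into three kinds: (i) a forward-image step used by \skelfwd\ to grow the breadth-first forward set $\FW(s)$ of the current seed $s$; (ii) a backward-image step used to intersect $\FW(s)$ with the states that can reach $s$, thereby identifying $\scc$, the strongly connected component of $s$; and (iii) a constant amount of skeleton bookkeeping (the stack operations recording $\newset$ and $\newstate$). The aim is to show that, summed over all recursive calls, the total of (i), (ii) and (iii) never charges a single state of $S$ more than five symbolic steps, which yields the $5\cdot|S|$ estimate.

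For the forward steps (i) the key observation is that the number of image iterations needed to compute $\FW(s)$ by breadth-first search equals the number of BFS frontiers, which is exactly the length $|U'|$ of a shortest path from $s$ to the farthest reachable state, i.e. the skeleton $(U',s')$ returned by \skelfwd. I would therefore charge these $|U'|$ forward steps one-per-state to the states of $U'$. The correctness of this charging rests on the cited lemma: $U'\subseteq\FW(s)$ and $(U',s')$ is a spine-set, and \sccfind\ passes exactly this spine-set as the seed order $(A,B)$ to the recursive call on the subgraph induced by $\FW(s)\setminus\scc$. Hence each state charged in this round is consumed as a future seed inside a strictly smaller subgraph, so the spine-set ordering guarantees that a given state absorbs forward-growth charges only a bounded number of times. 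The same depth bound applies to the backward search (ii): it is confined to $\FW(s)$ and terminates within at most the same number of frontiers, so its cost is absorbed into the per-state charges, and since $\scc$ is deleted from the graph no state is charged twice for backward work. Adding the constant bookkeeping of (iii) and tallying the charges in the order of \cite{GPP}'s appendix gives at most five per state.

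For the second bound $5\cdot D(G)\cdot N(G)+N(G)$ I would count recursive calls rather than states. Each call that actually invokes \skelfwd\ peels off exactly one full SCC, so there are at most $N(G)$ such calls; within any single call the forward search, the backward search and the skeleton recovery each run for at most $D(G)$ frontiers, since no shortest path in $G$ is longer than its diameter. Thus one call costs at most $5\cdot D(G)$ symbolic steps plus one unit of per-SCC overhead, for a grand total of at most $5\cdot D(G)\cdot N(G)+N(G)$. Taking the minimum of this estimate and the $5\cdot|S|$ estimate gives the stated result.

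The main obstacle is making the charging of part (i) airtight, i.e. proving that the constant is genuinely $5$ rather than merely $O(1)$. The difficulty is that forward sets are recomputed inside the two recursive subgraphs, and one must show that these recomputations telescope: the spine-set invariant has to be preserved across recursion so that the $|U'|$ forward steps of a parent call are paid for exactly once by the later seed-processing of its skeleton states, with no state accruing a super-constant pile of charges as the recursion deepens. Verifying this bookkeeping — that the spine-set handed down is always a bona fide spine-set of the smaller subgraph and that the skeleton states are processed in the induced order — is precisely the careful accounting carried out in \cite{GPP}, whose appendix fixes the constant at $5$, and is the step I expect to spend the most effort on.
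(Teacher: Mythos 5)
Your proposal reconstructs the right framework, but note first what the paper itself does with this statement: it does \emph{not} prove it — Theorem~\ref{thrm-cgp} is quoted from \cite{GPP}, and the paper's only justification is a pointer to the appendix of \cite{GPP} ("the last sentence explicitly claims that every state is charged at most 5 symbolic steps"), plus the informal description of spine-sets and amortization in Section~\ref{sec:symscc}. Your sketch matches that framework (amortized charging for the $5\cdot|S|$ bound, per-call counting for the $5\cdot D(G)\cdot N(G)+N(G)$ bound), but it has a genuine gap exactly where the theorem has content, namely the constant $5$. Concretely: (a)~your step classification is internally inconsistent — in item~(iii) you call the skeleton recovery "a constant amount of bookkeeping", but the skeleton phase of \skelfwd\ performs one $\Pre$ operation per stack pop, i.e.\ up to one symbolic step per BFS level, and these steps must themselves be charged to states (they are one of the five charges); (b)~your key sentence, that "the spine-set ordering guarantees that a given state absorbs forward-growth charges only a bounded number of times", is precisely the crux, and you never prove a concrete bound. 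The fact one needs — which this paper states in its appendix when comparing against \cite{GPP} — is that a state can belong to a spine-set \emph{at most twice} over the whole execution (the second time in the call in which its own scc is found); each inclusion costs two charged steps (one forward image, one skeleton $\Pre$), and every state receives one further charge from the backward search that identifies its scc, giving $2+2+1=5$. Without that lemma your charging yields $O(|S|)$ but not $5\cdot|S|$, and since the entire point of Section~\ref{sec:symscc} is to improve the constant ($5\cdot n$ versus $3\cdot|S|+N(G)$), an $O(1)$ bound is not enough. Deferring the accounting to "the appendix of \cite{GPP}" cannot close this gap, because the statement being proved \emph{is} the theorem of \cite{GPP}; as written, the proposal is circular on its quantitative claim.

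A second, smaller gap concerns the $5\cdot D(G)\cdot N(G)+N(G)$ bound. You assert that in every recursive call each search runs for at most $D(G)$ frontiers "since no shortest path in $G$ is longer than its diameter", but the recursion operates on \emph{induced subgraphs}, and deleting vertices can in general increase distances beyond $D(G)$. The claim is true here, but only because of structural properties you do not mention: $\FW(s)$ is forward-closed, so paths between states of $S\setminus\FW(s)$ never enter $\FW(s)$ and distances there are preserved; and no state of $\FW(s)\setminus\scc$ can reach $\scc$ (such a state could reach $s$ and is reachable from $s$, hence would lie in $\scc$), so deleting the scc does not stretch paths among the surviving states either. With these two observations, applied inductively down the recursion, the per-call bound of $O(D(G))$ frontiers for the forward search, the skeleton recovery, and the backward scc search goes through, and the second bound follows; without them the argument as stated does not.
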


\smallskip\noindent{\bf Improved symbolic algorithm.} 
We now present our improved symbolic scc algorithm and refer to the algorithm 
as \isym. 
Our algorithm mainly modifies the sub-procedure \skelfwd.
The improved version of \skelfwd\ procedure takes an additional input 
argument $Q$, and returns an additional output argument that is stored as a 
set $P$ by the calling \sccfind\ procedure. 
The calling function passes the set $U$ as $Q$.
The way the output $P$ is computed is as follows: at the end of the forward search 
we have the following assignment: $P:= \fw \cap Q$. 
After the forward search, the skeleton of the forward set is computed 
with the help of a stack. 
The elements of the stacks are sets of states stored in the 
forward search. The spine set computation is similar to \skelfwd,
the difference is that when elements are popped of the stack, 
we check if there is a non-empty intersection with $P$, if so, 
we break the loop and return. 
Moreover, for the backward searches in \sccfind\ we initialize $\scc$ by $P$ rather than $s$. 
We refer to the new sub-procedure as \iskelfwd\ (detailed pseudocode in 
the following subsection).

\smallskip\noindent{\bf Correctness.} 
Since $s$ is the last element of the spine set $U$, and $P$ is the intersection 
of a forward search from $s$ with $U$, it means that all elements of $P$ are 
both reachable from $s$ (since $P$ is a subset of $\FW(s)$) and can reach $s$ 
(since $P$ is a subset of $U$). 
It follows that $P$ is a subset of the scc containing $s$.
Hence not computing the spine-set beyond $P$ does not change the future 
function calls, i.e., the value of $U'$, since the omitted parts of $\newset$ 
are in the scc containing $s$.
The modification of starting the backward search from $P$ does not change 
the result, since $P$ will anyway be included in the backward search. 
So the \isym\ algorithm gives the same result as \sym, and the correctness 
follows from Theorem~\ref{thrm-cgp}.

\smallskip\noindent{\bf Symbolic steps analysis.} 
We present two upper bounds on the number of symbolic steps of the algorithm. 
Intuitively following are the symbolic operations that need to be 
accounted for: (1) when a state is included in a spine set for the first time in 
\iskelfwd\ sub-procedure which has two parts:
the first part is the forward search and the second part is 
computing the skeleton of the forward set; 
(2) when a state is already in a spine set and is found in forward search of 
\iskelfwd\ and (3) the backward search for determining the scc.
We now present the number of symbolic steps analysis for \isym.

\begin{enumerate}
\item There are two parts of \iskelfwd, (i)~a forward search and 
(ii)~a backward search for skeleton computation of the forward set. 
For the backward search, we show that the number of steps performed 
equals the size of $\newset$ computed. One key idea of the analysis is the proof where we 
show that a state becomes part of spine-set at most once, as compared to the algorithm 
of~\cite{GPP} where a state can be part of spine-set at most twice. Because, when it is already part 
of a spine-set, it will be included in $P$ and we stop the computation of spine-set when an 
element of $P$ gets included. We now split the analysis in two cases: (a) states that 
are included in spine-set, and (b)~states that are not included in spine-set.
\begin{enumerate}
\item We charge one symbolic step for the backward search of \iskelfwd\ (spine-set computation) 
to each element when it first gets inserted in a spine-set. For the forward search, we see that 
the number of steps performed is the size of spine-set that would have been computed if we did 
not stop the skeleton computation. But by 
stopping it, we are only omitting states that are part of the scc. Hence we charge one symbolic 
step to each state getting inserted into spine-set for the first time and each state of the scc. 
Thus, a state getting inserted in a spine-set is charged two symbolic steps 
(for forward and backward search) of \iskelfwd\ the first time it is inserted.

\item A state not inserted in any spine-set is charged one symbolic step for backward search which 
determines the scc.
\end{enumerate}
Along with the above symbolic steps, one step is charged to each state for the forward 
search in \iskelfwd\ at the time its scc is being detected.
Hence each state gets charged at most three symbolic steps.
Besides, for computing $\newstate$, one symbolic step is required per scc found.
Thus the total number of symbolic steps is bounded by $3 \cdot |S| +  N(G)$, 
where $N(G)$ is the number of scc's of $G$.

\item Let $D^*$ be the sum of diameters of the scc's in a $G$. 
Consider a scc with diameter $d$. 
In any scc the spine-set is a shortest path, and hence the size of the 
spine-set is bounded by $d$. 
Thus the three symbolic steps charged to states in spine-set contribute to at most
$3\cdot d$ symbolic steps for the scc. 
Moreover, the number of iterations of forward search of \iskelfwd\ charged 
to states belonging to the scc being computed are at most $d$. And the number 
of iterations of the backward search to compute the scc is also at most $d$.
Hence, the two symbolic steps charged to states not in any spine-set also contribute 
at most $2\cdot d$ symbolic steps for the scc.
Finally, computation of $\newset$ takes one symbolic step per scc.
Hence we have $5\cdot d+1$ symbolic steps for a scc with diameter $d$. 
We thus obtain an upper bound of $5D^*+N(G)$ symbolic steps.
\end{enumerate}
It is straightforward to argue that the number of symbolic steps of 
\isccfind\ is at most the number of symbolic steps of \sccfind.
The detailed pseudocode and technical details of the running time analysis is presented in 
the appendix.

\begin{theorem}\label{thrm_sym_scc}
Let $G=(S,E)$ be a directed graph. 
The algorithm \isym\ correctly computes the scc decomposition 
of $G$ in $\min\set{3 \cdot |S| + N(G), 5 \cdot D^*(G) + N(G)}$ symbolic steps,
where $D^*(G)$ is the sum of diameters  of the scc's of $G$, and $N(G)$ is the number of 
scc's in $G$.
\end{theorem}

\begin{remark}
Observe that in the worst case \sccfind\ takes $5 \cdot n$ symbolic steps, whereas
\isccfind\ takes at most $4 \cdot n$ symbolic steps. 
Thus our algorithm improves the constant of the number of linear symbolic steps 
required for symbolic scc decomposition.
\end{remark}

\section{Experimental Results}
In this section we present our experimental results. We first present the
results for symbolic algorithms for MDPs with B\"uchi objectives and then 
for symbolic scc decomposition.

\smallskip\noindent{\bf Symbolic algorithm for MDPs with B\"uchi objectives.} 
We implemented all the symbolic algorithms (including the classical one) and 
ran the algorithms on randomly generated graphs. 
If we consider arbitrarily randomly generated graphs, then in most cases it 
gives rise to trivial MDPs.
Hence we generated more structured MDP graphs. First we generated a large number of
MDPs and as a first step chose the MDP graphs where all the algorithms required 
large number of symbolic steps, and then generated large number of MDP graphs 
randomly by small perturbations of the graphs chosen in the first step.
Our results of average symbolic steps required are shown in Table~\ref{tab1} 
and show that the new algorithms perform significantly better than 
the classical algorithm.
The running time comparison is given in Table~\ref{tab1r}.

\begin{table}
\begin{center}
\begin{tabular}{|c|c|c|c|c|}
\hline
Number of states  & Classical & \simpa\   & \oimpa\  & \siwl\  \\
\hline
5000 &  30731 & 3478 & 3898 & 3573 \\ 
\hline
10000  & 103977 & 6622 & 7490 & 6815 \\ 
\hline
20000   & 306015 & 12010 & 13212 & 13687 \\ 
\hline
\hline
\end{tabular}
\end{center}
\caption{The average symbolic steps required by symbolic algorithms for MDPs
with B\"uchi objectives.}\label{tab1}
\end{table}

\begin{table}
\begin{center}
\begin{tabular}{|c|c|c|c|c|}
\hline
Number of states  & Classical & \simpa\   & \oimpa\  & \siwl\  \\
\hline
5000 &  78.8 & 9.7 & 10.2 & 10.8 \\ 
\hline
10000  & 563.7 & 40.3 & 43.0 & 46.1 \\ 
\hline
20000   & 3974.4 & 186.4 & 192.3 & 217.4 \\ 
\hline
\hline
\end{tabular}
\end{center}
\caption{The average running time required in sec by symbolic algorithms for MDPs
with B\"uchi objectives.}\label{tab1r}
\end{table}

\noindent{\bf Symbolic scc computation.} We implemented the symbolic 
scc decomposition algorithm from~\cite{GPP} and our new symbolic algorithm.
A comparative study of the algorithm of~\cite{GPP} and the algorithm 
of~\cite{BGS00} was done in~\cite{FabioPersonal}, and it was found that the 
performances were comparable.
Hence we only perform the comparison of the algorithm of~\cite{GPP} and our
new algorithm. 
We ran the algorithms on randomly generated graphs. 
Again arbitrarily randomly generated graphs in many cases gives rise to 
graphs that are mostly disconnected or completely connected. 
Hence we generated random graphs by first constructing a topologically
sorted order of the scc's and then adding edges randomly respecting the 
topologically sorted order.
Our results of average symbolic steps are shown in Table~\ref{tab2} and 
shows that our new algorithm performs better 
(around 15\% improvement over the algorithm of~\cite{GPP}).
The running time comparison is shown in Tab~\ref{tab2r}.

\begin{table}[ht]
\begin{center}
\begin{tabular}{|c|c|c|c|c|}
\hline
Number of states  & Algorithm from~\cite{GPP} & Our Algorithm & Percentage Improvement\\
\hline
10000 & 1043  & 878  & 15.83 \\ 
\hline
25000  & 2649 & 2264  & 14.53 \\ 
\hline
50000   & 6299 & 5394 & 14.36\\ 
\hline
\hline
\end{tabular}
\end{center}
\caption{The average symbolic steps required for scc computation.}\label{tab2}
\end{table}

\begin{table}
\begin{center}
\begin{tabular}{|c|c|c|c|c|}
\hline
Number of states  & Algorithm from~\cite{GPP} & Our Algorithm & Percentage Improvement\\
\hline
10000 & 7.7  & 6.3  & 17.53 \\ 
\hline
25000  & 48.3 & 40.0  & 16.98 \\ 
\hline
50000   & 180.8 & 152.5 & 15.67\\ 
\hline
\hline
\end{tabular}
\end{center}
\caption{The average running time required in sec for scc computation.}\label{tab2r}
\end{table}

In all cases, our implementations were the basic implementation of the 
algorithms, and more optimized implementations would lead to improved 
performance results.
The source codes, sample examples for the experimental results, and other 
details of the implementation are available at \url{http://www.cs.cmu.edu/~nkshah/SymbolicMDP}.

\section{Conclusion}
In this work we considered a core problem of probabilistic verification
which is to compute the set of almost-sure winning states in MDPs 
with B\"uchi objectives. 
We presented the first symbolic sub-quadratic algorithm for the problem, 
and also a new symbolic sub-quadratic algorithm (\iwl\ algorithm). 
As compared to all previous algorithms which idnetify the almost-sure
winning states upon termination, the \iwl\ algorithm can potentially discover 
almost-sure winning states in intermediate steps as well.
Finally we considered another core graph theoretic problem in verification
which is the symbolic scc decomposition problem.
We presented an improved algorithm for the problem. 
The previous best known algorithm for the problem required $5\cdot n$ symbolic
steps in the worst case and our new algorithm takes at most $4\cdot n$
symbolic steps, where $n$ is the number of states of the graph.
Our basic implementation shows that our new algorithms perform favorably
over the old algorithms.
Optimized implementations of the new algorithms and detailed experimental 
studies would be an interesting direction for future work.

\medskip\noindent{\bf Acknowledgements.} We thank Fabio Somenzi for sharing 
the facts about the performance comparison of the algorithm of~\cite{BGS00}
and the algorithm of~\cite{GPP}.
We thank anonymous reviewers for many helpful comments that improved the 
presentation of the paper.

\clearpage
\section{Appendix}

\newcommand{\stack}{\mathsf{stack}}
\newcommand{\pick}{\mathsf{pick}}
\newcommand{\sccpartition}{\mathsf{SCCPartition}}
\newcommand{\push}{\mathsf{Push}}
\newcommand{\pop}{\mathsf{Pop}}

\subsection{Technical details of improved symbolic scc algorithm}\label{subsec:technical}
The pseudocode of \sccfind\ is formally given as 
Algorithm~\ref{algorithm:sccfind}. 
The correctness analysis and the analysis of the number of 
symbolic steps is given in~\cite{GPP}.
The pseudocode of \isccfind\ is formally given as  Algorithm~\ref{algorithm:isccfind}. 
The main changes of \isccfind\ from \sccfind\ are as follows: 
(1) instead of \skelfwd\ the algorithm \isccfind\ calls procedure \iskelfwd\ 
that returns an additional set $P$ and \iskelfwd\ is invoked with an additional
argument that is $U$; 
(2) in line 4 of \isccfind\ the set $\scc$ is initialized to $P$ 
instead of $s$.
The main difference of \iskelfwd\ from \skelfwd\ is as follows: 
(1) the set $P$ is computed in line 4 of \iskelfwd\ as $\fw \cap Q$,
where $Q$ is the set passed by \isccfind\ as the argument; and 
(2) in the while loop it is checked if the element popped intersects 
with $P$ and if yes, then the procedure breaks the while loop.
The correctness argument from the correctness of \sccfind\ is already
shown in Section~\ref{subsec:main-idea}.

\begin{algorithm}[!ht]
\caption{\bf \sccfind}
\label{algorithm:sccfind}
{ 
\begin{tabbing}
aa \= aa \= aa \= aa \= aa \= aa \= aa \= aa \= aa \= aa \= aa \= aa \kill
\> {\bf Input:}  $(S,E, \langle U,s \rangle)$, i.e., a graph $(S,E)$ with spine set $(U,s)$. \\
\> {\bf Output:} $\sccpartition$ i.e. the set of SCCs of the graph $(S,E)$ \\

\> {\bf Initialize} $\sccpartition:=\emptyset$; $\fw:=\emptyset$; \\
\> 1. {\bf if}  $(S = \emptyset)$ {\bf then} \\
\>\>  1.1 {\bf return}; \\
\> 2. {\bf if } $(U = \emptyset)$ {\bf then} \\
\>\> 2.1 $s := \pick(S)$ \\ 
\> 3. $\langle \fw, \newset, \newstate  \rangle :=$ \skelfwd $(S,E,s)$ \\
\> 4. $\scc = s$  \\ 
\> 5. {\bf while} $(((\pre(\scc) \cap \fw) \setminus \scc) \neq \emptyset)$ {\bf do }\\
\>\> 5.1 $\scc := \scc \cup (\pre(\scc) \cap \fw)$ \\
\> 6. $\sccpartition := \sccpartition \cup \{\scc\}$ \\
(Recursive call on $S \setminus \fw$) \\
\> 7. $S^{\prime} := S \setminus \fw$ \\
\> 8. $E^{\prime} := E \cap (S^{\prime} \times S^{\prime})$ \\
\> 9. $U^{\prime} := U \setminus \scc $ \\
\> 10. $s^{\prime} := \pre(\scc \cap U) \cap (S \setminus \scc)$ \\ 
\> 11. $\sccpartition := \sccpartition \cup$ \sccfind $(S^{\prime},E^{\prime}, \langle U^{\prime},s^{\prime} \rangle) $\\
(Recursive call on $\fw \setminus \scc$) \\
\> 12. $S^{\prime} := \fw \setminus \scc$ \\
\> 13. $E^{\prime} := E \cap (S^{\prime} \times S^{\prime})$ \\
\> 14. $U^{\prime} := \newset \setminus \scc $\\
\> 15. $s^{\prime} := \newstate \setminus \scc $\\
\> 16. $\sccpartition := \sccpartition \cup$ \sccfind $(S^{\prime},E^{\prime}, \langle U^{\prime},s^{\prime} \rangle) $\\
\> 17. Return $\sccpartition$
\end{tabbing}
}
{ 
\begin{tabbing}
aa \= aa \= aaa \= aaa \= aaa \= aaa \= aaa \= aaa \kill
{\bf Procedure \skelfwd} \\
\>{\bf Input:} $(S,E,s)$, i.e., a graph $(S,E)$ with a state $s \in S$. \\
\>{\bf Output:}  $\langle \fw,\newset,\newstate \rangle$, \\ 
\>\> i.e. forward set $\fw$, new spine-set $\newset$ and $\newstate \in \newset$ \\

\> 1. Let $\stack$ be an empty stack of sets of nodes \\
\> 2. $L := s$ \\
\> 3. {\bf while} $(L \neq \emptyset)$ {\bf do} \\
\>\> 3.1 $\push(\stack,L)$ \\
\>\> 3.2 $\fw := \fw \cup L$ \\
\>\> 3.3 $L := \post(L) \setminus \fw$ \\

\> 4. $L := \pop(\stack)$ \\
\> 5. $\newset := \newstate := \pick(L)$ \\ 
\> 6. {\bf while} $(\stack \neq \emptyset)$ {\bf do} \\
\>\> 6.1 $L := \pop(\stack)$ \\
\>\> 6.2 $\newset := \newset \cup \pick(\pre(\newset) \cap L)$ \\

7. {\bf return} $\langle \fw,\newset,\newstate \rangle$ \\
\end{tabbing}
}
\end{algorithm}

\begin{algorithm}[t]
\caption{\bf \isccfind}
\label{algorithm:isccfind}
{ 
\begin{tabbing}
aa \= aa \= aa \= aa \= aa \= aa \= aa \= aa \= aa \= aa \= aa \= aa \kill
\> {\bf Input:}  $(S,E, \langle U,s \rangle)$, i.e., a graph $(S,E)$ with spine set $(U,s)$. \\
\> {\bf Output:} $\sccpartition$, the set of SCCs of the graph $(S,E)$ \\
\> {\bf Initialize} $\sccpartition:=\emptyset$; $\fw:=\emptyset$; \\
\> 1. {\bf if}  $(S = \emptyset)$ {\bf then} \\
\>\>  1.1 {\bf return}; \\
\> 2. {\bf if } $(U = \emptyset)$ {\bf then} \\
\>\> 2.1 $s := \pick(S)$ \\ 
\> 3. $\langle \fw, \newset, \newstate, P  \rangle :=$ \iskelfwd $(S,E,U,s)$ \\
\> 4. $\scc = P$ \\
\> 5. {\bf while} $(((\pre(\scc) \cap \fw) \setminus \scc) \neq \emptyset)$ {\bf do }\\
\>\> 5.1 $\scc := \scc \cup (\pre(\scc) \cap \fw)$ \\
\> 6. $\sccpartition := \sccpartition \cup \{\scc\}$ \\
(Recursive call on $S \setminus \fw$) \\
\> 7. $S^{\prime} := S \setminus \fw$ \\
\> 8. $E^{\prime} := E \cap (S^{\prime} \times S^{\prime})$ \\
\> 9. $U^{\prime} := U \setminus \scc $ \\
\> 10. $s^{\prime} := \pre(\scc \cap U) \cap (S \setminus \scc)$ \\  
\> 11. $\sccpartition := \sccpartition \cup$ \isccfind $(S^{\prime},E^{\prime}, \langle U^{\prime},s^{\prime} \rangle) $\\
(Recursive call on $\fw \setminus \scc$) \\
\> 12. $S^{\prime} := \fw \setminus \scc$ \\
\> 13. $E^{\prime} := E \cap (S^{\prime} \times S^{\prime})$ \\
\> 14. $U^{\prime} := \newset \setminus \scc $\\
\> 15. $s^{\prime} := \newstate \setminus \scc $\\
\> 16. $\sccpartition := \sccpartition \cup $ \isccfind $(S^{\prime},E^{\prime}, \langle U^{\prime},s^{\prime} \rangle) $\\
\end{tabbing}
}
{ 
\begin{tabbing}
aa \= aa \= aaa \= aaa \= aaa \= aaa \= aaa \= aaa \kill
{\bf Procedure \iskelfwd} \\
\>{\bf Input:} $(S,E,Q,s)$, i.e., a graph $(S,E)$ with a set $Q$ and a state $s \in S$. \\
\>{\bf Output:}  $\langle \fw,\newset,\newstate, P \rangle$ \\

\> 1. Let $\stack$ be an empty stack of sets of nodes \\
\> 2. $L := s$ \\
\> 3. {\bf while} $(L \neq \emptyset)$ {\bf do} \\
\>\> 3.1 $\push(\stack,L)$ \\
\>\> 3.2 $\fw := \fw \cup L$ \\
\>\> 3.3 $L := \post(L) \setminus \fw$ \\
\> 4. $P := \fw \cap Q$ \\
\> 5. $L := \pop(\stack)$ \\
\> 6. $\newset := \newstate := \pick(L)$ \\ 
\> 7. {\bf while} $(\stack \neq \emptyset)$ {\bf do} \\
\>\> 7.1 $L := \pop(\stack)$ \\
\>\> 7.2 {\bf if} $(L \cap P \neq \emptyset)$ {\bf then} \\
\>\>\> 7.2.1 {\bf break while loop} \\
\>\> 7.3 {\bf else} $\newset := \newset \cup \pick(\pre(\newset) \cap L)$ \\

8. {\bf return} $\langle \fw,\newset,\newstate, P \rangle$ \\
\end{tabbing}
}
\end{algorithm}


\smallskip\noindent{\bf Symbolic steps analysis.}
We now present the detailed symbolic steps analysis of the algorithm.
As noted in Section \ref{sec:symimpa}, common symbolic operations on a set of states are 
$\Pre$, $\Post$ and $\CPre$. 
We note that these operations involve symbolic sets of $2\cdot \log(n)$ variables, as 
compared to symbolic sets of $\log(n)$ variables required for operations such as union, 
intersection and set difference. 
Thus only $\Pre$, $\Post$ and $\CPre$ are counted as symbolic steps, as done in~\cite{GPP}.
The total number of other symbolic operations is also $O(|S|)$. 
We note that only lines 5 and 10 of \isccfind\ and lines 3.3 and 7.3 of \iskelfwd\  
involve $\Pre$ and $\Post$ operations.

\smallskip\noindent In the following, we charge the costs of these lines to states in 
order to achieve the $3 \cdot |S|+N(G)$ bound for symbolic steps. 
We define subspine-set as $\newset$ returned by \iskelfwd\ and 
show the following result.

\begin{lemma}\label{lemma:subspine}
For any spine-set $U$ and its end vertex $u$, $T$ is a subspine-set iff 
$U \setminus T \subseteq \scc(u)$. 
\end{lemma}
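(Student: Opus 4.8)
The plan is to unfold the definition of the subspine-set $T$ as the set $\newset$ produced by \iskelfwd\ when it is invoked with source $u$ (the end vertex of $U$) and with the argument $Q := U$, and to compare it against the full skeleton $\sigma$ that the untruncated \skelfwd\ would return from the same source. Recall that $\sigma$ is a shortest path $w_0 = u, w_1, \dots, w_D$ from $u$ to a vertex $\newstate$ of maximum distance in $\fw = \FW(u)$, recorded level by level on the stack (the popped set $L_j$ consisting of the vertices at $\post$-distance $j$ from $u$), and that $\newset$ is built by sweeping the stack from $L_D$ downward and stopping at the first popped level that meets $P := \fw \cap Q$. Thus $T$ is a suffix of $\sigma$, and the lemma amounts to identifying exactly which vertices the truncation discards.

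First I would establish two clean structural facts. (i) $P \subseteq \scc(u)$: every $p \in P$ lies in $\FW(u)$ and is therefore reachable from $u$, and it lies on the chordless path $U$ ending in $u$ and therefore reaches $u$; hence $p$ and $u$ are strongly connected. (ii) $U \cap \scc(u)$ is a contiguous suffix of the path $U = (u_0, \dots, u_k = u)$: if $u_i \in \scc(u)$ and $i < j \le k$, then $u_j$ reaches $u$ along the path while $u$ reaches $u_i$ and $u_i$ reaches $u_j$ along the path, so $u_j \in \scc(u)$. Combining (i) with the observation that a vertex $u_i$ of $U$ lies in $\FW(u)$ only if $u$ reaches $u_i$ (and then $u_i \in \scc(u)$) yields $P = U \cap \FW(u) = U \cap \scc(u)$.

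For the forward implication I would show that every vertex removed by the truncation lies in $\scc(u)$, so that $U \setminus T \subseteq \scc(u)$ (reading $U$ as the skeleton $\sigma$ grown from $u$). The sweep is cut at the deepest level $L_j$ with $j \le D-1$ meeting $P$; by (i) this level contains a vertex of $\scc(u)$, and the discarded vertices are the skeleton prefix $w_0 = u, \dots, w_j$ at levels $0, \dots, j$. The key step is to argue, using that the order recorded on the stack is consistent with the $\post$-distance from $u$ and that $P$ meets level $j$, that each such $w_i$ is strongly connected to $u$: it is reachable from $u$ as a vertex of a shortest path, and it reaches $u$ because, below the cut, the skeleton stays inside the region spanned by $P \subseteq \scc(u)$. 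This is the step I expect to be the main obstacle, since it is exactly where stopping the spine-set computation early must be reconciled with the ordering invariant of the earlier skeleton/spine-set lemma of~\cite{GPP}; I would make it precise by exhibiting, for each discarded $w_i$, a return walk to $u$ through a vertex of $P$, so that the prefix of $\sigma$ up to level $j$ together with this return forms a closed walk through $u$.

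For the reverse implication I would take any $T$ with $U \setminus T \subseteq \scc(u)$ and show it is a legitimate subspine-set by showing it drives the same subsequent computation as $\newset$. Since $T$ and $\newset$ can differ only within $\scc(u)$, and since \isccfind\ initializes $\scc$ to $P$ and closes it under $\pre \cap \fw$, all of $U \cap \scc(u)$ is absorbed into the scc of $u$ and then subtracted in the recursive calls that form $\newset \setminus \scc$; hence the two choices pass identical arguments to the recursion and are indistinguishable, so $T$ qualifies as a subspine-set. Together the two implications give the claimed equivalence, and I would close by noting that this is precisely the property needed to conclude that \isym\ returns the same scc decomposition as \sym.
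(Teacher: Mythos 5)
Your overall route is the same as the paper's: establish $P \subseteq \scc(u)$ (your fact (i) matches the paper's first step, and your sharper identity $P = U \cap \FW(u) = U \cap \scc(u)$ is correct), then argue that every vertex discarded by the truncation lies in $\scc(u)$. The trouble is that this second step --- the one you yourself flag as ``the main obstacle'' --- is exactly where your argument stops being a proof, and the repair you propose cannot work. A discarded vertex $w_i$ is a vertex of a shortest path from $u$ chosen by the backward sweep at a level below the cut; the only guarantee at the cut level $L_j$ is that $L_j$ \emph{contains some} vertex of $P$, not that the skeleton passes through one. Hence there is in general no ``return walk to $u$ through a vertex of $P$.'' Concretely, take $U = \{v,u\}$ with edges $v \to u$ and $u \to v$, plus a chain $u \to a_1 \to a_2$. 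The forward search from $u$ has levels $\{u\}$, $\{v,a_1\}$, $\{a_2\}$ and $P = \{u,v\}$; \iskelfwd\ pops $\{a_2\}$, then breaks on $\{v,a_1\}$, so $T = \{a_2\}$, while the untruncated skeleton is forced to be $u \to a_1 \to a_2$ (the unique shortest path to $a_2$, since $\pre(\{a_2\}) \cap \{v,a_1\} = \{a_1\}$). The discarded vertices are $u$ and $a_1$; but $a_1$ reaches only $a_2$, so it has no path to $u$ or to any vertex of $P$, hence $a_1 \notin \scc(u)$ and no closed walk of the kind you want exists. So the forward implication, as you (and, read literally, the lemma) formulate it, is false, and no amount of care at your ``main obstacle'' will close it.

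For what it is worth, the paper's own proof glosses over the same point: it asserts that every omitted state ``has a path to $v$,'' which holds only if the cut-level $P$-vertex happens to be a predecessor of the last retained skeleton vertex --- precisely the unestablished step. What actually survives, and what the symbolic-step analysis needs, is weaker: every path from $u$ to the cut-level vertex $v \in P$ lies entirely inside $\scc(u)$ (each vertex on it reaches $u$ through $v$), hence $|\scc(u)| \geq j+1$, i.e., the \emph{number} of discarded levels is at most $|\scc(u)|$; and for correctness of \isccfind\ one only needs that the truncated set minus $\scc$ is still a (possibly non-maximal) spine-set, which follows because $T$ is a suffix of a shortest path. Your reverse implication has a separate defect: since a subspine-set is by definition a possible output of \iskelfwd, arguing that $T$ ``drives the same subsequent computation'' is circular and does not show that $T$ is such an output (nor does the paper prove this direction). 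In short: you identified the crux correctly, but you left it open, and the way you propose to close it fails on the example above.
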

\begin{proof}
Note that while constructing a subspine-set $T$, we stop the construction when we find any state $v \in \fw \cap U$ 
from the spine set.
Now clearly since $v \in U$, there is a path from $v$ to $u$.
Also, since we found this state in $FW(u)$, there is a path from $u$ to $v$. Hence, $v \in \scc(u)$. 
Also, each state that we are omitting by stopping construction of $T$ has the property that there is a path from $u$ to that state and a path from that state to $v$.
This implies that all the states we are omitting in construction of $T$ are in $\scc(u)$.
\hfill \qed
\end{proof}

Note that since we pass $\newset \setminus \scc$ in the subsequent call to \isccfind, 
it will actually be a spine set for the reduced problem. 
In the following lemma we show that any state can be part of subspine-set at most
once, as compared to twice in the \sccfind\ procedure in~\cite{GPP}. 
This lemma is one of the key points that lead to the improved analysis of 
symbolic steps required. 

\begin{lemma}\label{lemma:atmostonce}
Any state $v$ can be part of subspine-set at most once.
\end{lemma}
\begin{proof}
In \cite{GPP}, the authors show that any state $v$ can be included in spine sets at most twice in \skelfwd. 
The second time the state $v$ is included is in line 6 of \skelfwd\ when the $\scc(v)$ of the state is to be found. 
In contrast, \iskelfwd\ checks intersection of the subspine-set being constructed with the set $P$ that contains the states of $\scc(v)$ which are already in a subspine-set.
When this happens, it stops the construction of the subspine-set. 
Now if $v$ is already included in the subspine-set, then it will be part of $P$ and would not be included in subspine-set again.
Hence, $v$ can be part of subspine-set at most once. 
\hfill \qed
\end{proof}

\begin{lemma}
\label{lemma:nonspine}
States added in $\scc$ by iteration of line 5 of \isccfind\ are exactly the states which are not part of any subspine-set.
\end{lemma}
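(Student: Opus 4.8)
The plan is to prove the claimed equality by viewing the whole execution as a process that assigns every state to the unique recursive call of \isccfind\ in which its scc is output, and then analysing, call by call, how that scc is populated. In any call the variable $\scc$ is first seeded in line~4 with $P=\fw\cap U$ and then grown only in line~5 by the backward sweep $\scc:=\scc\cup(\pre(\scc)\cap\fw)$; since line~5 adds states strictly outside the current $\scc$, the states it contributes are exactly $\scc\setminus P$. Thus it suffices to show, over all calls, that $\bigcup(\scc\setminus P)$ equals the set of states lying in no subspine-set, and I would prove the two inclusions separately, aggregating with Lemma~\ref{lemma:atmostonce} (a state joins a subspine-set at most once) so that no state is charged twice.

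First I would handle the inclusion ``not in any subspine-set $\Rightarrow$ added by line~5''. The key observation is that the spine-set $U$ handed to a call is always the parent's subspine-set with the already-found scc removed, i.e. $U\subseteq\newset_{\mathrm{parent}}$; hence $P=\fw\cap U\subseteq\newset_{\mathrm{parent}}$, and every state that ever enters $\scc$ through the line~4 seed belongs to some subspine-set. Contrapositively, a state in no subspine-set is never in any $P$, so in the call that outputs its scc it must be supplied by the line~5 backward sweep. This direction is essentially bookkeeping once the invariant $U\subseteq\newset_{\mathrm{parent}}$ is stated and verified against the recursive calls in lines 9--16 of \isccfind.

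The reverse inclusion ``added by line~5 $\Rightarrow$ in no subspine-set'' is the substantive part and is where I expect the main difficulty. I would track a state $v$ added by line~5 in the call $C$ computing $\scc(v)$ and suppose, for contradiction, that $v\in\newset_A$ for some call $A$. If $A\neq C$ then $v\notin\scc_A$, so $v$ survives into the spine $\newset_A\setminus\scc_A$ of a child call; since the remaining graph strictly shrinks, the first descendant whose forward set contains $v$ places $v$ into its $P$ and hence into $\scc$ via line~4, contradicting that $v$ entered via line~5 in $C$. The genuinely delicate case is $A=C$, i.e. $v$ lies in the subspine-set of its own scc-computing call: here the just-seeded forward set $\FW(s)$ may be strongly connected, so the skeleton can momentarily run through scc states before the stopping test fires. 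The crux is the \iskelfwd\ stopping rule (line~7.2), which breaks the skeleton construction as soon as a popped layer meets $P$; combined with Lemma~\ref{lemma:subspine} (the omitted skeleton states all lie in $\scc(s)$) and the fact that $\newset$ is passed onward only as $\newset\setminus\scc$, this should show that any subspine state belonging to $\scc(s)$ is in fact an element of $P$, hence seeded in line~4 rather than discovered in line~5. Making this last step fully rigorous --- in particular pinning down the degenerate empty-spine top-level call, where $\scc$ is seeded from the start state and Lemma~\ref{lemma:atmostonce} must be invoked to prevent a skeleton state from being recharged --- is the part I would spend the most care on.
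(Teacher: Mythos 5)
Your first inclusion and the cross-call half of the second are sound, and in fact more careful than the paper's own proof, which consists of the bare assertion that ``$P$ has all the states from $\scc$ which are part of subspine-set.'' Your case $A\neq C$ (a state of $\newset_A\setminus\scc_A$ is carried along in spine-sets until the first descendant call whose forward set contains it, where it lands in $P$ and is therefore seeded by line~4, never added by line~5) is precisely the justification the paper omits. One bookkeeping repair: the recursive call of line~11 receives $U\setminus\scc$ (line~9), not $\newset\setminus\scc$, so your invariant must be ``every spine-set is contained in the subspine-set of \emph{some ancestor} (or is empty),'' proved by induction on the recursion tree rather than by inspecting only the parent; with that fix your first inclusion goes through.

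The genuine gap is the case $A=C$ that you yourself flag, and it cannot be closed along the lines you propose, because the claim you hope to establish there --- every subspine state lying in the scc of $s$ belongs to $P$ --- is false. Consider a call on the graph with edges $s\to a$, $a\to b$, $b\to s$, $b\to c$ and the trivial spine-set $(\{s\},s)$ (a single vertex is a chordless path, and this is exactly how the initial call must be read for line~4 to seed a nonempty $\scc$, as in \sccfind). The forward layers are $\{s\},\{a\},\{b\},\{c\}$ and $P=\fw\cap\{s\}=\{s\}$; the skeleton is built backwards from the deepest layer, so \iskelfwd\ picks $c$, then $b$, then $a$ --- none of these layers meets $P$ --- and the stopping test of line~7.2 fires only when $\{s\}$ itself is popped. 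Hence $\newset=\{a,b,c\}$, while line~4 seeds $\scc:=\{s\}$ and line~5 then adds $a$ and $b$: two states that are added by line~5 \emph{and} lie in a subspine-set, contradicting the lemma. The stopping rule truncates the skeleton only at layers meeting $P$; layers strictly deeper than every state of $P$ can perfectly well contain states of the scc of $s$ (this happens whenever a shortest path to the farthest state runs through that scc), so Lemma~\ref{lemma:subspine} cannot rescue the argument, and neither can Lemma~\ref{lemma:atmostonce}, which concerns repeated insertions rather than this first one. Note the paper's proof founders at the identical point: its key assertion is stated without justification and fails in this example. The statement would have to be weakened --- e.g.\ to membership in the trimmed sets $\newset\setminus\scc$ actually passed on as spine-sets --- and the charging argument reworked accordingly (the $3\cdot|S|+N(G)$ bound survives, since a state in the bad case $A=C$ pays for line~5 but pays for the forward search of \iskelfwd\ only once over the whole run); as literally stated, neither your sketch nor the paper's proof establishes the lemma.
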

\begin{proof}
We see that in line 5 of \isccfind, we start from $\scc = P$ and then we find the $\scc$ by backward search. 
Also, $P$ has all the states from $\scc$ which are part of any subspine-set. 
Hence, the extra states that are added in $\scc$ are states which are never included in a subspine-set.
\hfill \qed
\end{proof}

\smallskip\noindent{\bf Charging symbolic steps to states.} We now consider three cases 
to charge symbolic steps to states and scc's.
\begin{enumerate} 
\item 
\emph{Charging states included in subspine-set.}
First, we see that the number of times the loop of line 3 in \iskelfwd\ is executed
is equal to the size of the spine set that \skelfwd\ would have computed.
Using Lemma \ref{lemma:subspine}, we can charge one symbolic step to each state of the subspine-set and each state of the $\scc$ that is being computed.
Now, the number of times line 7.3 of \iskelfwd\ is executed equals the size of subspine-set that is computed. 
Hence, we charge one symbolic step to each state of subspine-set for this line.

\smallskip\noindent Now we summarize the symbolic steps charged to each state which is part of some subspine-set. First time when a state gets into a subspine-set, it is charged two steps, one for line 3.3 and one for line 7.3 of \iskelfwd. If its $\scc$ is not found in the same call to \isccfind, then it comes into action once again when its $\scc$ is being found. By Lemma \ref{lemma:atmostonce}, it is never again included in a subspine set. Hence in this call to \iskelfwd, it is only charged one symbolic step for line 3.3 and none for line 7.3 as line 7.3 is charged to states that become part of the newly constructed subspine-set. Also because of Lemma \ref{lemma:nonspine}, since this state is in a subspine-set, it is not charged anything for line 5 of \isccfind.
Hence, a state that occurs in any subspine-set is charged at most three symbolic steps.

\item \emph{Charging states not included in subspine-set.}
For line 5 of \isccfind, the number of times it is executed is the number of states that are added to $\scc$ after initialization to $\scc = P$.
Using Lemma~\ref{lemma:nonspine}, we charge one symbolic step to each state of this $\scc$ that is never a part of any subspine-set.
Also, we might have charged one symbolic step to such a state for line 3.3 of \iskelfwd\ when we called it.
Hence, each such state is charged at most two symbolic steps.

\item \emph{Charging $\scc$s.}
For line 10 of \isccfind, we see that it is executed only once in a call to \isccfind\ that computes a $\scc$. 
Hence, the total number of times line 10 is executed equals $N(G)$, the number of $\scc$s of the graph.
Hence, we charge each $\scc$ one symbolic step for this line.

\end{enumerate}

\smallskip\noindent 
The above argument shows that the number of symbolic steps that the algorithm \isccfind\ requires is at most $3 \cdot |S|+N(G)$.
This completes the formal proof of Theorem~\ref{thrm_sym_scc}.

We now present an example that presents a family of graphs with $k\cdot n$
states, where the \sccfind\ algorithm takes almost $5\cdot k \cdot n$ 
symbolic steps, whereas the \isccfind\ algorithm takes at most
$3\cdot k \cdot n + n$ symbolic steps.

\begin{example}
Let $k,n \in \Nats$.
Consider a graph with $k\cdot n$ states such that the states are 
numbered from $1$ to $k\cdot n$.
The edges are as follows: (1)~for all states $1\le i \le k\cdot n-1$, there is an edge $(i,i+1)$ 
(i.e, the states are all in a line); and 
(2)~for all $1\le i \le n$, there is an edge from state $k\cdot i$ to state $(i-1)\cdot k+1$.
We will show that the \sccfind\ algorithm requires roughly $(5\cdot k-1)\cdot n$ symbolic steps
on this graph.
Note the the number of scc's in this graph is $n$, and hence by Theorem~\ref{thrm_sym_scc}
the \isccfind\ algorithm takes at most $3\cdot k\cdot n + n$ symbolic steps. 

We now analyze the symbolic steps required by the \sccfind\ algorithm, and our analysis 
is in two steps.
\begin{enumerate}
\item \emph{Step~1.}
In the beginning, starting from the state~1 of the graph, 
the algorithm performs a forward and backward search to find the spine set. 
This will have a cost of two symbolic steps per state (one symbolic step while going forward, 
and one symbolic step while going backward), except for the first vertex which gets 
charged only one symbolic step (it does not get charged while going backwards). 
This gives a cost of $2\cdot k\cdot n-1$ symbolic steps. 
After this, the first scc will be found with an additional cost of only $k$, and the
first discovered scc consists of states $\set{1,2,\ldots,k}$.
Hence the total symbolic steps required is $2\cdot k\cdot n-1 +k$.

\item \emph{Step~2.} After Step~1, 
the algorithm will start finding scc's from the end of the spine set. 
So consider the set of last $k$ states from $k\cdot(n-1)+1$ to $n\cdot k$. 
The algorithm will pick the last state $n\cdot k$, 
find a spine set, consisting of the last $k$ states (states $k\cdot(n-1)+1$ to $n\cdot k$). 
This will have a cost of $2\cdot k-1$ symbolic steps ($k$ symbolic steps for the 
forward search, and $k-1$ symbolic steps for the backward search). 
After this, the algorithm will find the scc containing the last state $n\cdot k$ (i.e., the scc 
that consists of states $\set{k\cdot(n-1)+ 1,k\cdot(n-1)+ 2,\ldots,n\cdot k}$), and this 
takes $k$ symbolic steps as there are $k$ vertices in the scc.
Now Step~2 is repeated with state $k\cdot(n-1)$, and then repeated for state 
$k\cdot(n-2)$ and so on.
So the cost for every scc,  except the very first one, is $2k-1 + k = 3k-1$. 
Since there are $n$ scc's, the total number of symbolic steps required for Step~2 is
at least $(3\cdot k-1)\cdot (n-1)$.
\end{enumerate}
Hence the total symbolic steps required for the algorithm is at least
\[
2\cdot k\cdot n - 1 + k + (3\cdot k -1)\cdot (n-1)= (5\cdot k-1)\cdot n -2\cdot k.
\]
Note that with $k=n$, \sccfind\ takes at least $5\cdot n^2 -3\cdot n$ symbolic steps, whereas
the \isccfind\ takes at most $3 \cdot n^2 + n$ symbolic steps.
\hfill\qed
\end{example}

\end{document}